\newtheorem{lemma}{Lemma}
\newtheorem{theorem}{Theorem}
\newtheorem{proposition}[theorem]{Proposition}
\renewcommand{\cal}[1]{\mathcal{#1}}
\newcommand{\Haf}{ \operatorname{Haf} }
\newcommand{\Per}{ \operatorname{Per} }
\newcommand{\z}{\vb{z}}
\newcommand{\x}{\vb{x}}
\newcommand{\X}{\vb{X}}
\newcommand{\s}{\vb{s}}
\renewcommand{\a}{\vb{a}}
\newcommand{\e}{\vb{e}}
\newcommand{\floor}[1]{\left\lfloor #1 \right\rfloor} % floor function
\renewcommand{\binom}[2]{%
	\left(%
	\substack{	#1 \\	#2}
	\right)%
}   %   binomial coefficient
\newcommand{\ba}{\begin{eqnarray}}
\newcommand{\ea}{\end{eqnarray}}
    \definecolor{airforceblue}{rgb}{0.36, 0.54, 0.66}
\definecolor{antiquebrass}{rgb}{0.8, 0.58, 0.46}
\newcommand*\samethanks[1][\value{footnote}]{\footnotemark[#1]} % copied from https://tex.stackexchange.com/questions/4170/multiple-thanks-that-refer-to-same-text
\begin{document}

%\title{Cost Reduction in Classical Simulation of Gaussian Boson Sampling}
%\title{Classical Simulation of Gaussian Boson Sampling}
\title{Speedup in Classical Simulation of Gaussian Boson Sampling}

\author[1,2]{Bujiao Wu\thanks{These two authors contributed equally} }

\author[3]{Bin Cheng\samethanks }
\author[1,2]{Jialin Zhang}
\author[3, 4]{Man-Hong Yung\thanks{yung@sustech.edu.cn}}
\author[1,2]{Xiaoming Sun\thanks{sunxiaoming@ict.ac.cn}}

\affil[1]{Institute of Computing Technology, Chinese Academy of Sciences, Beijing, China.}

\affil[2]{University of Chinese Academy of Sciences, Beijing, China.}

\affil[3]{Institute for Quantum Science and Engineering, and Department of Physics, Southern University of Science and Technology, Shenzhen 518055, China}

\affil[4]{Shenzhen Key Laboratory of Quantum Science and Engineering, Southern University of Science and Technology, Shenzhen 518055, China}

\date{}

% \affil[5]{Central Research Institute, Huawei Technologies, Shenzhen, 518129, P. R. China}

% 

% \author[1,2]{Bujiao Wu}
% \author[3]{Bin Cheng}
% \author[3]{Man-Hong Yung\thanks{yung@sustc.edu.cn}}
% \author[1,2]{Xiaoming Sun\thanks{sunxiaoming@ict.ac.cn}}
% \affil[1]{Institute of Computing Technology, Chinese Academy of Sciences, Beijing, China.}
% \affil[2]{University of Chinese Academy of Sciences, Beijing, China.}
% \affil[3]{}
% \date{}

\maketitle

\abstract

Gaussian boson sampling is a promising model for demonstrating quantum computational supremacy, which eases the experimental challenge of the standard boson-sampling proposal.
Here by analyzing the computational costs of classical simulation of Gaussian boson sampling, 
we establish a lower bound for achieving quantum computational supremacy for a class of Gaussian boson-sampling problems, where squeezed states are injected into every input mode. Specifically, we propose a method for simplifying the brute-force calculations for the transition probabilities in Gaussian boson sampling, leading to a significant reduction of the simulation costs. Particularly, our numerical results indicate that  we can simulate 18 photons Gaussian boson sampling at the output subspace on a normal laptop, 20 photons on a commercial workstation with 256 cores, and suggest about 30 photons for supercomputers. 
These numbers are significantly smaller than those in standard boson sampling, suggesting Gaussian boson sampling may be more feasible for demonstrating quantum computational supremacy.

% \begin{multicols}{2}

\section{Introduction\label{introduce}}
There are a lot of efforts \cite{aaronson2017complexity,terhal2002adaptive,aaronson2011computational,bremner2010classical,boixo2018characterizing,kruse2018detailed} focusing on demonstrating the quantum supremacy~\cite{harrow2017quantum} over classical computers, such as sampling from commuting quantum circuits \cite{shepherd2009temporally}, random quantum circuits \cite{boixo2018characterizing}, and linear optical networks \cite{aaronson2011computational} as well as the 
variants \cite{lund2014boson, hamilton_gbs_2017}. 
All of the above sampling problems are shown to be intractable to classical computers, yet quantum devices nowaday are not large enough and their computational power is limited. At the same time of designing larger-scale quantum devices, classical simulation algorithms should also be explored to benchmark quantum supremacy and test the limit of classical computers.
% prevent quantum devices exceeding classical computers in recent years. 
For random circuit sampling, Boixo et al. \cite{boixo2018characterizing} claim that if the scale of quantum devices are greater than 50 qubits and depth 40, then it would be impossible for any current classical devices to simulate their specific random circuits. 
Since then, there have been many efforts trying to reach this limit, and eventually, the bar for quantum supremacy of random circuit sampling has been pushed higher~\cite{pednault2017breaking,boixo2017simulation,chen201864,li2018quantum,chen2018classical}. 
% However, \cite{yung2017can,pednault2017breaking,boixo2017simulation,chen201864,li2018quantum,chen2018classical} shows that the simulation capability of classical computers are beyond the claimed limit of random circuit simulation. 

Boson Sampling (BS) is another well-known supremacy model, proposed by Aaronson and  Arkhipov \cite{aaronson2011computational}. However, the original boson sampling model is hard to implement experimentally since this model takes coherent single photons as input whereas it is difficult to generate a large number of photons at once.
% a great number of single photons are difficult to generate.
Later, Lund et al. \cite{lund2014boson} improved it and proposed Scattershot Boson Sampling (SBS), which is argued to have the same complexity as BS. 
% SBS model introduces a probabilistic source to make the preparation of photons more accessible. But a major flaw of SBS is that extra classical memory is required for all possible input basis. 
In the SBS model, single-photon source is replaced by squeezed lights, which is easier to prepare experimentally. Specifically, every mode of the linear optical network is injected with half of a two-mode squeezed state, whose another half is used for post-selection. 
SBS model was improved in Ref.~\cite{hamilton_gbs_2017}, where Gaussian Boson Sampling (GBS) model was introduced. GBS utilizes single-mode squeezed states (SMSS) as input, and there is no need for post-selection. The output probability of GBS is related to a matrix function called Hafnian, which, like permanent, is also $\# P$-hard to compute~\cite{bjorklund2018faster}.
This model is proved to be computationally hard based on the $\# P$ hardness of computing Hafnian~\cite{hamilton_gbs_2017, kruse2018detailed}. Moreover, there are a lot of applications for Gaussian boson sampling, such as finding dense subgraphs~\cite{arrazola2018using},  estimating the number of perfect matchings of undirected graphs~\cite{bradler2017gaussian}, molecular docking~\cite{banchi2019molecular}, etc.

In this paper, we introduce a classical simulation algorithm for GBS, with time complexity $O(m \sinh^2 r + \text{poly}(n) 2^{8n/3})$, where $m$ is the number of modes, $n$ is the number of output photons and $r$ is the squeezing parameter. If $O(m2^{n})$ memory is used to store and recycle the intermediate calculations, the time complexity can be improved to $O(m \sinh^2 r + \text{poly}(n) 2^{5n/2})$.
% To benchmark our algorithm, we compare the distribution obtained by our algorithm, by brute-force simulation and theoretical value in the case when there are 16 modes and 4 output photons. 
In our simulation model, each mode of the linear optical network is injected with a SMSS with identical squeezing parameter. The source code of our algorithm is provided in Ref.~\cite{Code}.

Our simulation algorithm can be divided into two steps. In the first step, we sample $n$ photons with time polynomial in the number of modes $m$.
 % (When the number of photons are larger than the order of the number of modes, the algorithm will response ignorance since by the properties of Gaussian state, the photons are less than the number of modes almost all the case). 
In the second step, we sample an output configuration $\s =(s_{1},\cdots, s_{m})$ for some fixed $n$, where $s_{j}$ is the photon number in the $j$-th mode. The second step uses a similar idea of the algorithm in Ref.~\cite{clifford2018classical}, but the form of marginal probabilities in our case is more complicated, since calculating Hafnian might be harder than calculating permanent~\cite{bjorklund2018faster}. Specifically, we give a method to break a large Hafnian (and the square of its modulus) into pieces of smaller Hafnians and permanents, so that we can speed up the calculation of marginal probabilities. In this way, we can sample a configuration much more efficiently than the brute-force sampling.
For the second step, we can simulate 18 photons on a laptop, and 20 photons on the HuaWei Kunlun server with 256 cores, in about one day. As a comparison, using the brute-force method, we can only sample 6 photons on a laptop. Based on the simulation result, Sunway TaihuLight is estimated to be able to sample 30 photons.

There are some other classical simulations for the boson sampling problems. For standard boson sampling, Neville et al.~\cite{neville2017classical} performed a simulation for sampling 30 photons approximately on a laptop and 50 photons on a supercomputer, based on Metropolised independence sampling algorithm, restricted on the collision-free regime (i.e. the regime where the probability of observing more than one photons in one mode is sufficiently small). At the same time, Clifford et al.~\cite{clifford2018classical} introduced an $O(n2^{n})$ time classical simulation algorithm to produce one sample of the output distribution of BS. On the other hand, there are also works considering the simulability of boson sampling under noise~\cite{shchesnovich2019noise,oszmaniec2018classical}.

As for GBS, Quesada et al. introduced a GBS model with threshold detector~\cite{quesada2018gaussian}, which only detects whether the photons exist or not in one mode, and gave an $O(m^{2}2^{n})$ time and exponential space classical sampling algorithm for $m$ modes and $n$ clicks, where a click happens when there are photons in one mode. Recently, they also performed an actual implementation of their algorithm on the Titan supercomputer~\cite{gupt_classical_2018}. They can simulate a problem instance with 800 modes and 20 clicks using 240,000 CPU cores in about 2 hours. Thus, by Table \ref{tab:ComTime}, in the collision-free regime, where GBS with threshold detector is equivalent to GBS with photon-number-resolving detector, our algorithm has a better performance.

%Note: Recently, Quesada et al.~\cite{quesada2019classical} presented an improved classical simulation algorithm for GBS, which consider a more general version of GBS problems than the threshold detector model in~\cite{quesada2018gaussian}.

% However, the hardness of their model needs to rely on an assumption that there is only a very small probability of observing two or more photons in the same mode. On the other hand, the threshold for an output mode seems more reasonable to be some constant integer $c$ instead of 1, since due to fluctuations, the probability that photons exist in an output mode may be actually high in real experiments.
% the 1-photon detected threshold are harder to achieve than $c$-photons in quantum experiment for constant $c$, 

 \begin{table}[]
    \centering
    \begin{tabular}{c|c|c |c}
        \hline\hline
       \multicolumn{2}{c|}{\multirow{2}{*}{classical simulation Alg.}}  & \multicolumn{2}{c}{ classical limit ($n, m, t $)}\\
       \cline{3-4}
        \multicolumn{2}{c|}{} & non-parallel & super-computer\\
         \hline
        \multicolumn{2}{c|}{ BS\cite{neville2017classical}} & (30, 900, 0.5) & (50, 2500, $2400$) (prediction for Tianhe) \\
        \hline
        \multirow{3}{*}{GBS} &using threshold detectors\cite{quesada2018gaussian} & (12 ,288, 21.86) & (20, 800, 1.83) (Titan)\\
        \cline{2-4}
        & non-collision space & (18, 648, 5.7) & (30, 1800,$9.5$)(prediction for Sunway)\\
        \cline{2-4}
        & full space & (18, 324, 10.5) & (30, 900, 4.78)(prediction for Sunway)\\
        \hline\hline
    \end{tabular}
    \caption{Comparison the classical simulation limit of several different setups of linear optic network experiment, where $n, m, t$ denote the number of photons, the number of modes, and the corresponding time, with the unit of hours respectively.}
    \label{tab:ComTime}
\end{table}

This paper is organized as follows. In section~\ref{def}, we review the GBS model and Hafnian problem. In section~\ref{simulation}, we introduce our algorithm, and give numerical results of our algorithm. In section~\ref{conclusion}, we give a conclusion.

\section{Overview of Gaussian Boson Sampling\label{def}}
In this section, we review the definition of Gaussian boson sampling (GBS). Specifically, Section~\ref{subsec:defPhoton} reviews the process of generating $n$ photons with single-mode squeezed states (SMSS). Section \ref{subsec:defConf} reviews the output distribution for basic GBS model. Section \ref{subsec:defHaf} reviews the definition of Hafnian problem and introduce some properties of the Hafnian of some specific matrix which will be used in later section. 

\subsection{Probability of generating $n$ photons \label{subsec:defPhoton}}
GBS refers to the procedure of sampling photons from a linear optical network supplied with Gaussian input states~\cite{kruse2018detailed}, including SMSS as a special case. 
A SMSS $\ket{\phi}$ can always be expanded in the Fock basis with an even number $s$ of photons~\cite{gerry2005introductory}:
\ba
\ket{\phi} = \sum_{s \text{ even}} c_s(r) \ket{s} \ ,
\ea
where the coefficients $|c_s(r)|^{2}= \frac{1}{\cosh{(r)}} \left( \frac{\tanh{(r)}}{2} \right)^{s} \frac{s!}{((s/2)!)^{2}}$ depend on the squeezing parameter $r$.

Suppose SMSS with identical squeezing parameter $r$ is injected into all $m$ modes of the optical network, we use $m$ to denote the number of modes of optical network. The input state is then given by, 
\ba
(\ket{\phi})^{\otimes m} = \sum_{\vb{s}} c_{\vb{s}} (r) \ket{\vb{s}} \ ,
\ea
where $\vb{s} = (s_1, \ldots, s_m)$ represents one of the configurations with $s_j$ being the photon number in the $j$-th mode, and $c_{\vb{s}} := c_{s_1}\cdots c_{s_m}$. 

Let us denote $\ket{\bar{n}} := \sum_{s_1 + \cdots + s_m = n} c_{\vb{s}} \ket{\vb{s}}$ to represent the (unnormalized) superposition of all input configurations of $n$ photons. Then we can also write the input state as, 
\ba
(\ket{\phi})^{\otimes m} = \sum_{n} \ket{\bar{n}} \ ,
\ea
where the normalization of $\ket{\bar{n}}$ equals to the probability of generating $n$ photons from $(\ket{\phi})^{\otimes m}$, denoted as $P_n$, i.e.,
\ba
P_n := \ip{\bar{n}} = \sum_{s_1 + \cdots + s_m = n} |c_{\vb{s}}|^2 \ .
\ea

\subsection{Probability of photon configuration\label{subsec:defConf}}

Next, suppose the action of the optical network is denoted by a unitary transformation $U$, the output state is given by $U (\dyad{\phi})^{\otimes m} U^\dagger$. The output probability $p(\vb{s})=p(s_1, \cdots, s_m) =\Tr[U (\dyad{\phi})^{\otimes m} U^\dagger \dyad{\vb{s}}]$ for measuring a particular configuration $\ket{\vb{s}}$ can be expressed as,
\ba
p(\vb{s}) &=& \sum_{n, n'} \Tr[U \dyad{\bar{n}}{\bar{n}'} U^\dagger \dyad{\vb{s}}] \\
&=& \sum_{n, n'} \mel{\vb{s}}{U}{\bar{n}} \mel{\bar{n}'}{U^\dagger}{\vb{s}} \ .
\ea
Since optical transformation preserves photon number, the summand is non-zero only if $n = n' = s_1+\cdots +s_m$. Consequently, we can compactly write $p(\vb{s}) = |\mel{\vb{s}}{U}{\bar{n}}|^2$. Suppose we further define 
\begin{equation}
p_n(\vb{s}) := \frac{|\mel{\vb{s}}{U}{\bar{n}}|^2}{\ip{\bar{n}}}
\end{equation}
to be the probability of measuring the configuration $\ket{\vb{s}}$ if $n$ photons are generated from $(\ket{\phi})^{\otimes n}$, then we can obtain the following:
\begin{align}
p(\vb{s}) = \ip{\bar{n}} \frac{|\mel{\vb{s}}{U}{\bar{n}}|^2}{\ip{\bar{n}}} = P_n \cdot p_n(\vb{s}) \ .
\end{align}
Note that $p_n(\vb{s})$ is similar to that of the standard boson sampling (boson sampling with Fock state input), except that the input state $\ket{\bar{n}}$ is however a superposition of Fock states with $n$ photons.

% In GBS, we sample photons from a linear optical network with a general Gaussian state $\hat{\rho}$ with the number state operator $\hat{S} = \bigotimes_{j = 1}^m | s_j \rangle \langle s_j | $, where $s_j$ is the number of photons in output modes, $|S\rangle=|s_{1},s_{2},\cdots, s_{m}\rangle$ and $s_1 + \cdots + s_m = n$.
% Then the output probability of $S$ is
% $$Pr(S)= \text{Tr}(\hat{\rho}\hat{S}).$$

\subsection{Probability in terms of Hafnian\label{subsec:defHaf}}
On the other hand, the output probability $p(\vb{s})=p(s_1, \cdots, s_m) $ for each configuration~$\ket{\vb{s}}$ can be expressed explicitly through Hafnian~\cite{kruse2018detailed},
% \begin{equation}
% p(s_1, \cdots, s_m) = \frac{1}{s_{1}!\cdots s_{m}!\cosh^{m}{(r)}}\text{Haf}(A_{\vb{s}}) \ .
% \label{ProMeaS}
% \end{equation}
\begin{align}
p(s_1, \cdots, s_m) &= \frac{\tanh^{n}(r)}{ s_{1}!\cdots s_{m}! \cosh^{m}(r)} |\text{Haf} (W_{\vb{s}})|^{2} \ ,
\label{ProOri}
\end{align}
Here the symbol `Haf' stands for Hafnian, which is a matrix function similar to permanent and determinant. The Hafnian of a symmetric $n\times n$ matrix $V$ is defined by ($n$ must be even) 
\begin{align}
\text{Haf}(V):=\sum_{\sigma\in \mathcal{M}_{n}}\prod_{j = 1}^{n/2} V(\sigma_{2j - 1},\sigma_{2j}) \ ,
\label{HafMatch}
\end{align}
where $\mathcal{M}_{n}$ is the set of all perfect matchings of $[n] := \{ 1, 2, \ldots, n \}$ and $V(i, j)$ is the $(i, j)$-th element of $V$. 
% Perfect matching is a permutation that interchanges exactly $n/2$ elements.
For example, when $n$ equals to 4, $\cal{M}_4 = \{ (12)(34), (13)(24), (14)(23) \}$, where $(ij)$ is a matching pair, and thus
\begin{align}
\Haf(V) = & V(1,2)V(3,4) + V(1,3)V(2,4) + V(1,4)V(2,3) \ .
\end{align}
Hafnian is permutation invariant from its definition, that is we can first interchange two columns of $V$, and then interchange the corresponding rows, and the Hafnian of the new matrix will remain the same.
Similar to permanents, exact computation of matrix Hafnian is also $\#P$-hard~\cite{valiant1979complexity}, which also implies GBS for the same squeezing parameters is also computational hard; the best classical algorithm takes time $O(n^3 2^{n/2})$~\cite{bjorklund2018faster}, which is closely related to the hardness of GBS~\cite{kruse2018detailed}.

Next, the matrix $W_{\vb{s}}$ is constrcuted from another $m \times m$ matrix (here $U$ refers to the $m\times m$ unitary acting on the single-photon subspace)
\begin{equation}
W := U U^t \ ,
\end{equation}

in the following way:
% For GBS based on SMSS inputs with the same squeezing parameter $r$, the matrix $A$ takes the following form:
% \begin{align*}
% A = \tanh(r) \begin{pmatrix}
% W & 0\\
% 0 & W^\dagger
% \end{pmatrix} \ ,
% \end{align*}
% where $W \equiv U^{*}U^{\dagger}$. 
first, we take $s_j$ copies of the $j$-th column of $W$ to form a $m \times n$ matrix, i.e., with $m$ rows and $n$ columns. Then, we take $s_j$ copies of the $j$-th row of the $m \times n$ matrix to form $W_{\vb{s}}$, which is a $n\times n$ matrix. For example, suppose $W := (w_{ij})$ is a $6\times 6$ matrix and $\vb{s} = (1,1,0,2,0,0)$. Then 
\ba
\label{eq:W_s_example}
W_{\vb{s}} = 
\begin{pmatrix}
w_{11} & w_{12} & w_{14} & w_{14} \\
w_{21} & w_{22} & w_{24} & w_{24} \\
w_{41} & w_{42} & w_{44} & w_{44} \\
w_{41} & w_{42} & w_{44} & w_{44}
\end{pmatrix} \ .
\ea

% It should be emphasized that the photon number $n$ has to be even, otherwise $P_n$ and thus $p(s_1,\cdots, s_m)$ vanishes. 

%\begin{align}
%p(S)_{n}=\frac{\tanh^{n}r}{\cosh^{m} r s_{1}!\cdots s_{m}!} |\text{Haf} (W_{s})|^{2}
%\end{align}

% \red{We aim at sampling from the output distribution with a classical simulation method. Since the output photons are not fixed by the nature of Gaussian states, we firstly sample a photon $n$ and then sample a $\vb{s}$ for this fixed $n$. The detailed description of sampling method is in Section~\ref{simulation}.}

\section{Classical simulation algorithm\label{simulation}}

In this section, we will describe our classical algorithm in details.
Our classical algorithm consists of two steps. The first step is sampling $n$ photons according to $P_n$, and the second step is sampling the configuration $\vb{s}$ according to $p_n(\vb{s})$. Although the time complexity of our sampling algorithm is exponential, it has great advantage compared to brute force sampling algorithm. We also give some numerical results for our algorithm as well as an estimation of the classical limit.

\subsection{Step 1: Sampling $n$-photon distribution}

\begin{figure*}
\center
\includegraphics[trim = 0mm 45mm 0mm 48mm, clip=true,width = 1.0\textwidth]{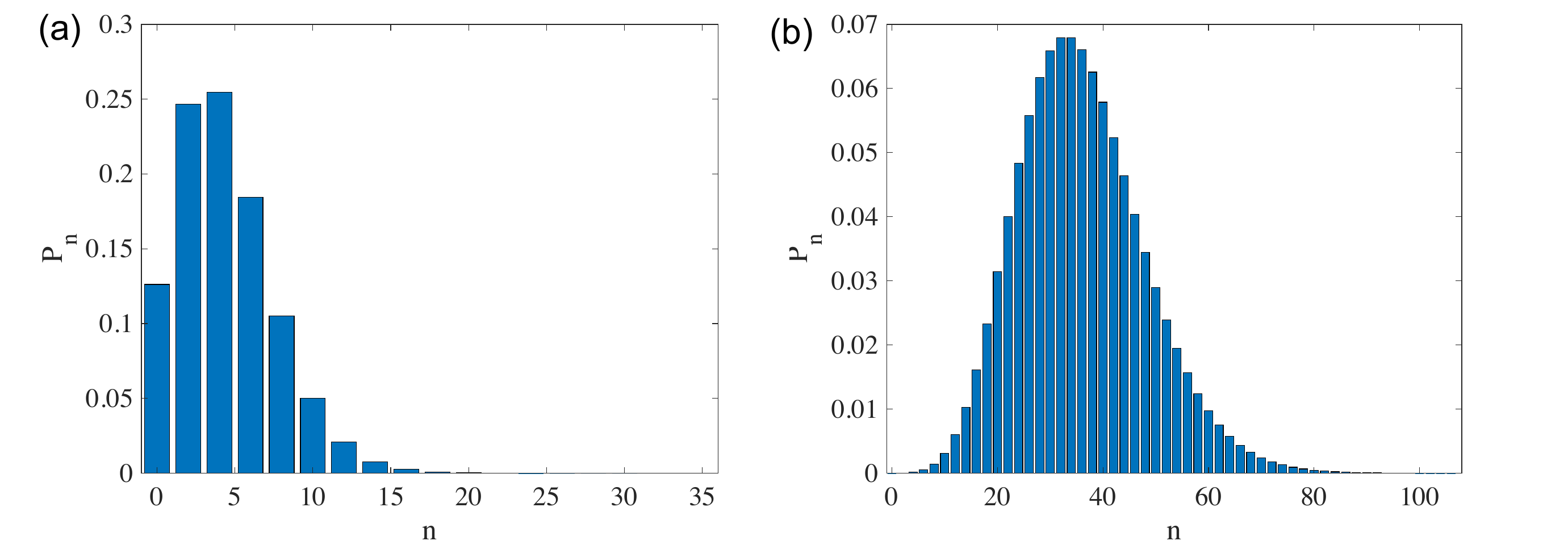}
\caption{Probabilities of generating $n$ photons ($n$ is even) from $m = 36$ single-mode squeezed states with a squeezing parameter (a) $r = 0.3423 \approx \ln (1 + 1/\sqrt{6}) $ and (b) $r = 0.8814 \approx \ln(1 + \sqrt{2}) $. }
\label{DisPhotons}
\end{figure*}

Recall that $P_n=0$ for odd $n$. For even $n$, the distribution $P_n$ is given by the negative binomial distribution~\cite{hilbe2011negative,kruse2018detailed},
% The probability to generate $n$ photons ($n$ is even) from $m$ single mode squeezed states with identical squeezing parameters is given by the negative binomial distribution \cite{hilbe2011negative},
% 
\begin{align}
P_n = \binom{\frac{m+n}{2}-1}{\frac{n}{2}} \frac{\tanh^{n} (r)}{\cosh^{m} (r)} \ ,
\label{Eqdistribution}
\end{align}
and $\sum_{n=0}^{\infty} P_{n} = 1$. In principle, one needs to include $n$ from $0$ to infinity, i.e., $\{0,2,4,\cdots,\infty\}$. However, the probability of measuring photon numbers that is much larger than the most probable photon number~\cite{kruse2018detailed} $n_{\text{most}} = 2\floor{ (m/2 - 1)\sinh^{2} r }$ is very small.  In other words, one can consider a cutoff value given by $c \, n_{\text{most}}$  for some constant $c$. For example, the most probable number of photons is about $\sqrt{n}$ when we set $r = \ln (1 + \sqrt{1/n})$, and about $m$ when we set $r = \ln (1 + \sqrt{2})$~\cite{kruse2018detailed}. {See Figure \ref{DisPhotons} for an example with $m = 36$ modes, and the squeezing parameters are about $\ln (1 + 1/\sqrt{6})$ and $\ln (1 + \sqrt{2})$ for Figure \ref{DisPhotons} (a) and (b) respectively. }
In the following, we will impose a truncation to the distribution $\{ P_n \}$, sampling from $n \in \{0,2,4,\cdots, N\}$, where $N := c n_{\text{most}}$. For each $n$ in this set, we calculate $P_n$, which can be done in polynomial time. After that, we renormalize $\{ P_0, P_2, \cdots P_N \}$ to make the truncated set a real probability distribution. Since there are $N/2$ probabilities, we can sample an even $n$ in $O(n_{\text{most}})$ time, which is linear in $m$.

\subsection{Step 2: change of basis}

Now, our task is to sample an $\s$ from distribution $p_n(\s)$. From Eq.~\eqref{ProOri} and \eqref{Eqdistribution}, one can get the explicit form of $p_n(\s)$, which becomes independent of the squeezing parameter~$r$:
\ba
p_n(\s) = \frac{p(\s)}{P_n} = \frac{1}{s_1!\cdots s_m! \binom{\frac{m+n}{2} - 1}{\frac{n}{2}}} |\Haf(W_{\s})|^2 \ .
\ea

Previously, we are working with second quantization in the Fock space. But for the purpose of our algorithm, we will change to first quantization. In other words, we consider the photons states are symmetrized in a certain basis as follows. Denote $\ket{\x} := \ket{x_1 x_2 \cdots x_n}$ to be the position basis of photons, where $x_i \in \{ 1,...,m\}$ denotes the output mode of the $i$-th photon. For identical particles like photons, we will have the following relation,
\ba
\ket{\s} = \frac{1}{\sqrt{N_p}} \sum_{\x} {}^{'} \ket{\x} \ ,
\ea
where the summation $\sum {}^{'}$ is over those $\ket{\x}$'s that match the configuration $\ket{\s}$ and $N_p := n!/(s_1!\cdots s_m!)$ is the number of terms in the summation. For example, suppose $\s = (1,1,0,2,0,0)$ and then the corresponding $\x$ can be $(1,2,4,4)$, $(2,1,4,4)$, $(4,1,2,1)$, etc.

Then, the output state $U (\ket{\phi})^{\otimes m} = \sum_{\s} \tilde{c}_{\s} \ket{\s}$ can be written in the position basis,
\ba
\sum_{\s} \frac{\tilde{c}_{\s}}{\sqrt{N_p}} \sum_{\x} {}^{'} \ket{\x} = \sum_{\x} \frac{\tilde{c}_{\s}}{\sqrt{N_p}} \ket{\x} \ ,
\ea
where $\tilde{c}_{\s} := \mel{\vb{s}}{U}{\bar{n}}$. Note that $q(\x) := |\tilde{c}_{\s}|^2/N_p$ is the probability of measuring $\ket{\x}$. Similar to $p_n(\s)$, we can define a conditional probability $q_n(\x) := q(\x)/P_n$, whose explicit form is,
\ba
q_n(\x) = \frac{1}{f_n} |\Haf(W_{\s})|^2 \ , 
\ea
where $f_n := n! \binom{\frac{m+n}{2} - 1}{\frac{n}{2}}$. 
{Appendix \ref{Normalization} also give the proof that $q_n(\x)$ indeed form a probability distribution.}
One can see that if $\ket{\x}$ and $\ket{\x'}$ correspond to the same configuration $\ket{\s}$, $q_n(\x) = q_n(\x')$.

In the expression of $q_n(\x)$, the Hafnian part is a function of $\s$, but we can change it to a function of $\x$. Define $W_{\x}$ by 
\begin{equation}
W_{\x}(i, j) := W(x_i, x_j) \ .
\end{equation}
Note that $W_{\x}$ is symmetric, because $W$ is symmetric as can be easily verified from its definition. 
In Appendix~\ref{App:equivalence}, we will show that although $W_{\x} \neq W_{\s}$, their Hafnians are the same: $\Haf(W_{\x}) = \Haf(W_{\s})$. Therefore, we have
\ba
q_n(\x) = \frac{1}{f_n} |\Haf(W_{\x})|^2 \ . \label{EqDisE}
\ea
Moreover, this fact yields $p_n(\s) = q_n(\x) N_p$, which implies that we can sample an $\x$ from $q_n(\x)$, and then read an $\s$ from $\x$. Since there are $N_p$ number of $\x$'s that give the same $\s$, the probability of sampling an $\s$ with this method is $q_n(\x) N_p$, that is $p_n(\s)$.

\subsection{Step 3: sampling from marginal distribution}

We now focus on sampling from distribution $q_n(\x)$. Similar to Ref.~\cite{clifford2018classical}, one can always decompose the distribution $q_n(\x)$ as follows,
\begin{align}
q_n(x_{1},\cdots,x_{n}) = & q_n(x_{1}) q_n(x_{2}|x_{1}) \notag \\ 
& \cdots q_n(x_{n}|x_{1},\cdots,x_{n-1}) \ ,
\label{ConPro}
\end{align}
which implies that we can obtain a sample $(x_{1},\cdots,x_{n})$ by sampling sequentially each conditional distribution given by,
\begin{align}
q_n(x_{k}|x_{1},\cdots,x_{k-1}) = \frac{q_n(x_{1},\cdots,x_{k})}{q_n(x_{1},\cdots,x_{k-1})} \ ,
\end{align}
where the marginal probability is defined by 
\begin{equation}
{q_n}\left( {{x_1}, \cdots ,{x_k}} \right): = \sum\limits_{{x_{k + 1}}, \ldots ,{x_n}} {{q_n}\left( {{x_1}, \cdots ,{x_n}} \right)} \label{eq:marginal_definition} \ .
\end{equation}
In other words, to sample from the distribution $q_n(\x)$, our main task is to evaluate many marginal probabilities $q_n(x_{1},\cdots,x_{k})$. Explicitly, after obtaining a sequence, $(x_1,x_2,\cdots,x_{k-1} )$, to sample the position of the $k$-th photon, $x_k$, we need to evaluate $m$ conditional probabilities $q_n(x_{k}|x_{1},\cdots,x_{k-1})$ for all $x_k \in [m]$ \footnote{$[m]$ represent set $\{1, \ldots, m\}$. }. In total, there needs to compute $mn$ conditional probabilities (or marginal probabilities) to sample a full string $\x$.  

From Eq.~\eqref{EqDisE}, each marginal probability involves a total of $m^{n-k}$ terms,
\begin{align}
q_n(x_1, \cdots, x_{k}) \propto & \sum_{x_{k+1}, \ldots , x_n} |\Haf(W_{\x})|^2 \label{eq:marginal_haf} \\ 
=& \sum_{x_{k+1}, \ldots , x_n} \sum_{\sigma, \tau \in \cal{M}_{n}} \prod_{j = 1}^{n/2}  W(x_{\sigma_{2j-1}}, x_{\sigma_{2j}}) W^*(x_{\tau_{2j-1}}, x_{\tau_{2j}}) \ . \label{eq:marginal} 
\end{align}
Here, we use $\sigma$ to label the variables appearing in $W$ and $\tau$ to label those in $W^*$. 
According to Ref.~\cite{bjorklund2018faster}, the time complexity of computing the Hafnian of a $n\times n$ matrix is $O(n^{3}2^{n/2})$, so computing $q_n(x_1, \cdots, x_k)$ by directly computing $m^{n-k}$ $\Haf(W_{\x})$ is of course costly. However, we shall show that one can eliminate many terms by breaking each Hafnian $|\Haf(W_{\x})|^2$ into many pieces of smaller sub-Hafnians. In this way, one can achieve a significant speedup compared with the brute-force sampling algorithm (see Section~\ref{subsec:numerical} for a comparison). 
% As there are $m^{n-k}$ matrices of size $n\times n$, the time complexity $T_k$ of brute-force calculation is $T_k = O(m^{n-k} n^3 2^{n/2})$ for $x_k$. Then, the total time complexity for sampling $\x$ is $m\sum_k T_k = O(m^{n+1} n^4 2^{n/2})$.

In summary, in our algorithm \ref{alg:GauSample1}, the time complexity for one sample is ${O}(\text{poly}(n) 2^{8n/3})$ (Appendix \ref{timeA}), if the intermediate values of the sub-Hafnians are not recycled. If we use $O(m2^{n})$ space to store those sub-Hafnians, we can avoid a lot of repeated computations, reducing the runtime to $O(\text{poly}(n) 2^{5n/2})$ (Appendix \ref{timeB}). For our purpose, we would call the former `polynomial-space algorithm' and the latter `exponential-space algorithm'. Alg. \ref{alg:GauSample1} shows a short version of the pseudocode of our classical simulation algorithm (see Appendix \ref{AlgLemma} more details).
\begin{center}
\begin{minipage}{1\linewidth}
\IncMargin{0.1cm}
\begin{algorithm}[H]
\SetKwInOut{Input}{input}
\SetKwInOut{Output}{output}
\Input{$m,n$ and an $m\times m$ matrix $U$.}
\Output{a sample $\mathbf{s}$.}
$W:= UU^{T},\mathbf{x}:= \emptyset$\;
\For {$k := 1$ to $n$}{
    \For{$l := 1$ to $m$}{
        \emph{Compute $w_{l}$ with Theorem~\ref{thm:marginal}} \tcp*{$w_l$ is the marginal}
    }
    $\mathbf{w}_k = (w_{1},\cdots, w_{m})$ \;
    $x_k \leftarrow Sample(\mathbf{w}_k)$ \tcp*{Sample an $x_k$ with pmf $\mathbf{w}_k = (w_{1},\cdots, w_{m})$}
    $\mathbf{x}\leftarrow (\mathbf{x},x_k)$\;
}
$\mathbf{s} \leftarrow $ Trans$(\mathbf{x})$ \tcp*{ Read $\s$ from $\x$.}
\Return{$\mathbf{s}$}
\caption{Sample an $\s$ from $p_n(\s)$ in $O(\text{poly}(n)2^{8n/3})$ time and polynomial space.}
\label{alg:GauSample1}
\end{algorithm}
\DecMargin{0.1cm}
\end{minipage}
\end{center}

% \subsection{Sampling for a fixed number photons. \label{subsec:Step2}}

% In this section, we consider the sampling algorithm for the fixed number of photons. In Sec.  \ref{subsubsec:method1} we give a simpler brute force sampling algorithm, however, it is time tedious. In Sec. \ref{subsubsec:method2} we give a novel algorithm by firstly amplifying the probability space, and then convert the probability function into marginal probability function with some combinatorial knowledge, though the new algorithm is still exponential, it is superior to
% brute force algorithm significantly.

% \subsubsection{Method: Sampling from marginal distribution \label{subsubsec:method2}}

\subsection{Step 4: Simplification of marginal probabilities}

In the following, we will show how each of the marginal probability $q_n(x_{1},\cdots,x_{k})$ can be simplified for reducing the computational cost.
The main idea of the simplification is to split the Hafnian of the original large matrix to Hafnians and permanents of some small matrices. Our simplification process can be visualized after defining a set of rules for the graph manipulation.
% \red{We shall first give a lemma, showing how a Hafnian can be broken into small pieces.} 

\subsubsection{Hafnian and graph}

There is a close relation between Hafnian and weighted perfect matchings of graph, which we summarize in Figure~\ref{fig:hafnian_and_graph}. Below, unless otherwise stated, we shall refer to a complete graph as an undirected graph with vertices connected to every other vertices and to itself, i.e. a standard complete graph with \emph{selfloops}. Moreover, since $W_{\x}$ is symmetric, it can be viewed as the adjacency matrix of a complete graph with $n$ vertices $\{ x_i \}$ and weighted edges, where the weight of each edge $(x_i, x_j)$ equals to $W(x_i, x_j)$ (for example, see Eq.~(\ref{eq:W_s_example})). For the sake of illustration, we may use the same notation to denote a graph and its adjacency matrix. 
Note that the weight $W(x_i, x_j)$ here could be a complex number. This is not standard, but helps us visualize our simplification process. Now we are ready to present the first two rules:
\begin{enumerate}
	\item[\textbf{Rule 1}] {The weight of} an edge $(x_i, x_j)$ represents the matrix element $W(x_i, x_j)$, we also use $(x_i,x_j)$ to denote the weight of edge $(x_i, x_j)$ with a little abuse of symbols.
	\item[\textbf{Rule 2}] When multiple disjoint edges are put together, it represents the product of the corresponding matrix elements.
\end{enumerate}
Figure~\ref{fig:hafnian_and_graph}~(b) gives an example of a perfect matching of the vertices $\{ x_i \}$, which is a set of disjoint edges that exactly connects every vertices, and according to Rule 2, can be translated to the product of four matrix elements of $W$.
% \wbj{The Hafnian term value for this perfect matching} is the product of four matrix elements of $W$, according to Rule 2. 
In this way, each term in the expansion of $\Haf(W_{\x})$ corresponds to a perfect matching of $\{ x_i \}$, and we can represent $\Haf(W_{\x})$ pictorially as Figure~\ref{fig:hafnian_and_graph}~(c).

\begin{figure*}
\center
\includegraphics[trim = 0mm 20mm 0mm 20mm, clip=true,width = 0.9\textwidth]{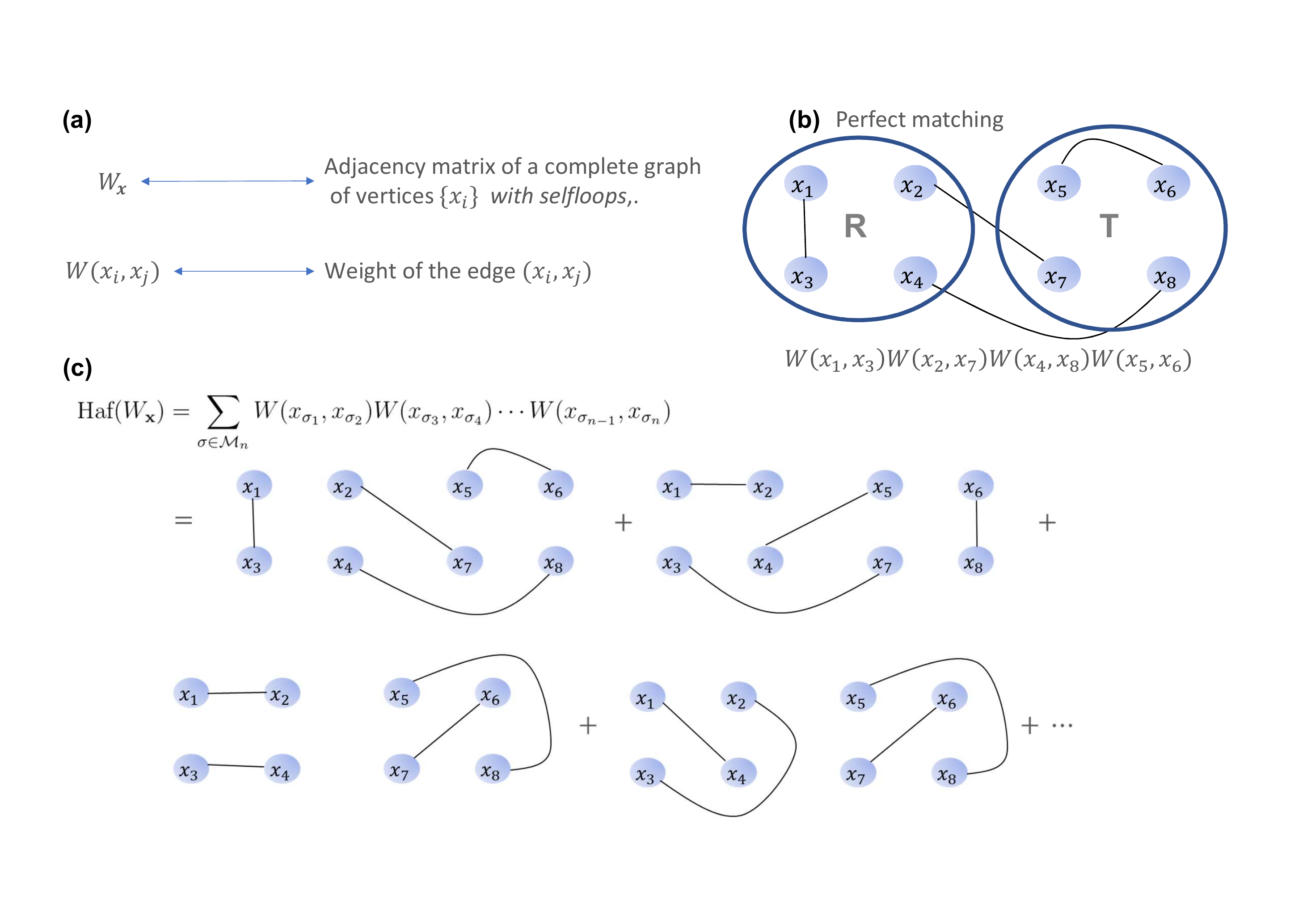}
\caption{Connection between Hafnian and weighted perfect matchings of a graph. \textbf{(a)} $W_{\x}$ can be viewed as an adjacency matrix; \textbf{(b)} a term in the expansion of $\Haf(W_{\x})$ corresponds to a perfect matching of the vertices $\{ x_i \}$; \textbf{(c)} visualization of the expansion of $\Haf(W_{\x})$.
}
\label{fig:hafnian_and_graph}
\end{figure*}

Now for a desired marginal probability $q_n(x_1, \cdots, x_k)$ that we want to calculate, we may partition the vertex set $\{ x_i \}$ into two, $R$ and $T$, where $R$ contains the vertices $\{x_1, \cdots, x_k\}$ and $T$ contains the remaining vertices, so that vertices in $T$ will be summed over.
We can classify the edges into three categories: edges inside $R$, edges inside $T$, and edges across $R$ and $T$. Figure~\ref{fig:hafnian_and_graph}~(b) is an example for one perfect matching in the case $k=4$. In this example, $(x_2, x_7)$ and $(x_4, x_8)$ are interconnected edges while the other two are entirely in $R$ or $T$.
%Note that $R\cup T \neq G$, as there are interconnected edges not counted. 

Let $\X_i$ be the set $\{ x_1, x_2, \cdots, x_i \}$.
Let $\a$ be a vector with elements drawing from $\X_k$ without replacement. We use $R_{\a}$ to denote both the complete subgraph with vertices specified by $\a$ and the corresponding adjacency matrix. For example, for $\a = (x_1, x_3)$, $R_{\a}$ is a complele subgraph of vertices $x_1$ and $x_3$, whose adjacency matrix is, 
\ba
R_{\a} = 
\begin{pmatrix}
W(x_1, x_1) & W(x_1, x_3) \\
W(x_3, x_1) & W(x_3, x_3)
\end{pmatrix} \ ,
\ea
as the weight of the edge $(x_i, x_j)$ is given by $W(x_i, x_j)$. We can similarly define $T_{\a'}$, which is a complete subgraph of $T$ with vertices specified by a vector $\a'$, whose elements are in the set $\X_n \backslash \X_k$, i.e. $x_j$ for $j>k$.

Now, let us consider the interconnected edges.
Let $\e$ be a vector with elements drawn from $\X_k$ and $\e'$ be another vector with elements drawn from $\X_n \backslash \X_k$. Then define $G_{\e, \e'}$ to be a complete bipartite subgraph of $G$ with vertices specified by $\e$ and $\e'$ (no selfloops) and we will use the same notation $G_{\e, \e'}$ to denote its biadjacency matrix \footnote{The adjacency matrix of $G_{\e, \e'}$ is $\begin{pmatrix} 0 & G_{\e, \e'} \\ G_{\e, \e'}^T & 0 \end{pmatrix}$}. For example, suppose $\e=(2, 4)$ and $\e'=(7, 8)$, and then
\ba
G_{\e, \e'} = 
\begin{pmatrix}
W(x_2, x_7) & W(x_2, x_8) \\
W(x_4, x_7) & W(x_4, x_8)
\end{pmatrix} \ .
\ea
In the remaining part of this paper, we may use a set as a vector (with indices in an increasing order). For example, the set $\{ x_2, x_1, x_5, x_3 \}$ will be viewed as a vector $(x_1,x_2,x_3,x_5)$.

Now, we are ready to show how a Hafnian of a big matrix can be broken into Hafnians and permanents of small ones. Lemma \ref{lem:ThmSplitHaf} in Appendix \ref{app:HafSplit} gives an intuitive explanation for the splitting process.

\subsubsection{Summation path}

Next, we would present the key idea of our simplification process. 
% First, for the purpose of illustration, we use $\{ R_1, \cdots, R_k\}$ to relabel the vertices in $R$, and $\{ T_1, \cdots, T_{n-k} \}$ to relabel those in $T$. Note that the concrete mapping is not important here; for example, $x_1$ could be $R_3$ or $R_k$.
In the expression of $q_n(x_1, \cdots, x_k)$, we are actually dealing with $|\Haf(W_{\x})|^2$, so we have two graphs to manipulate, $W_{\x}$ and $W^*_{\x}$. Correspondingly, we could define $R^*$ and $T^*$ to be the counterpart of $R$ and $T$, respectively. Recall that variables in $T_i$ and $T_i^*$ are those to be summed over in the expression of $q_n(x_1, \cdots, x_k)$. 
% We require that the same variable $x_i$ for $i > k$ shares the same subscripts in $T$ and $T^*$, so that we can leave out the `$*$' symbol in $T$.

\begin{figure*}
\center
\includegraphics[width = 0.9\textwidth]{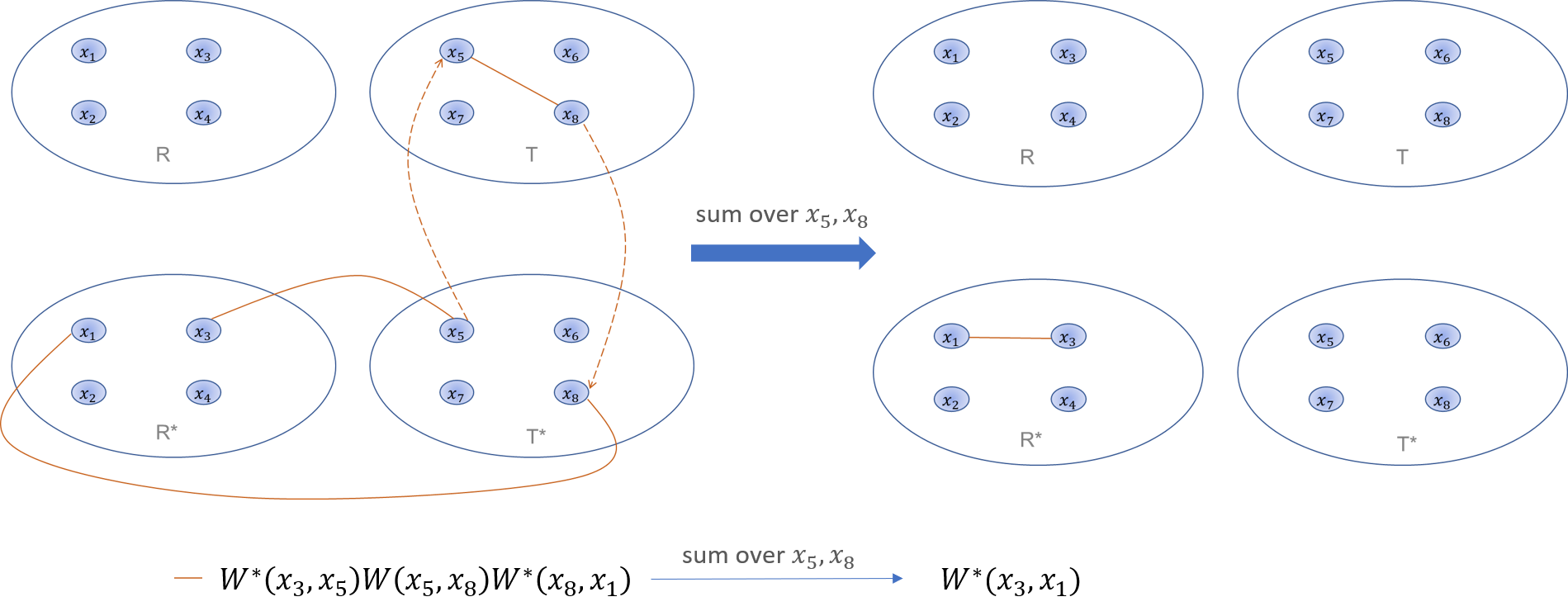}
\caption{Summation induces a path, which can be replaced by an edge connecting the two endpoints.
}
\label{fig:summation_path}
\end{figure*}

By the definition of $W$, we know that it is unitary, which means,
\ba
\sum_{i} W(i, j) W^*(i, j') = \delta(j, j') \ . \label{eq:W_unitary}
\ea
This gives us the third rule:
\begin{enumerate}
	\item[\textbf{Rule 3}] Summation over variables in $T$ and $T^*$ induces a `path', which can be replaced by an edge of the two endpoints.
\end{enumerate}
We shall use a simple example to explain this rule. On the left hand side of Figure~\ref{fig:summation_path}, when the three solid orange edges are put together, it represents $W^*(x_3, x_5) W(x_5, x_8) W^*(x_8, x_1)$, according to Rule 2. Then we sum over $x_5$ and $x_8$, which induces two dashed lines and form a path connecting these four variables (which we call a `summation path'). Using Eq.~\eqref{eq:W_unitary}, we have the following relation,
\ba\label{eq:summation_path_example}
\sum_{x_5, x_8} W^*(x_3, x_5) W(x_5, x_8) W^*(x_8, x_1) = W^*(x_3, x_1) \ .
\ea
The left-hand side represents a summation path with endpoints $x_1$ and $x_3$, and the right-hand side represents an edge $(x_1, x_3)$ in $R^*$, according to Rule 1. This equality means such a summation path is equivalent to an edge.

\begin{figure}[t]
\center
\includegraphics[width = 0.45\columnwidth]{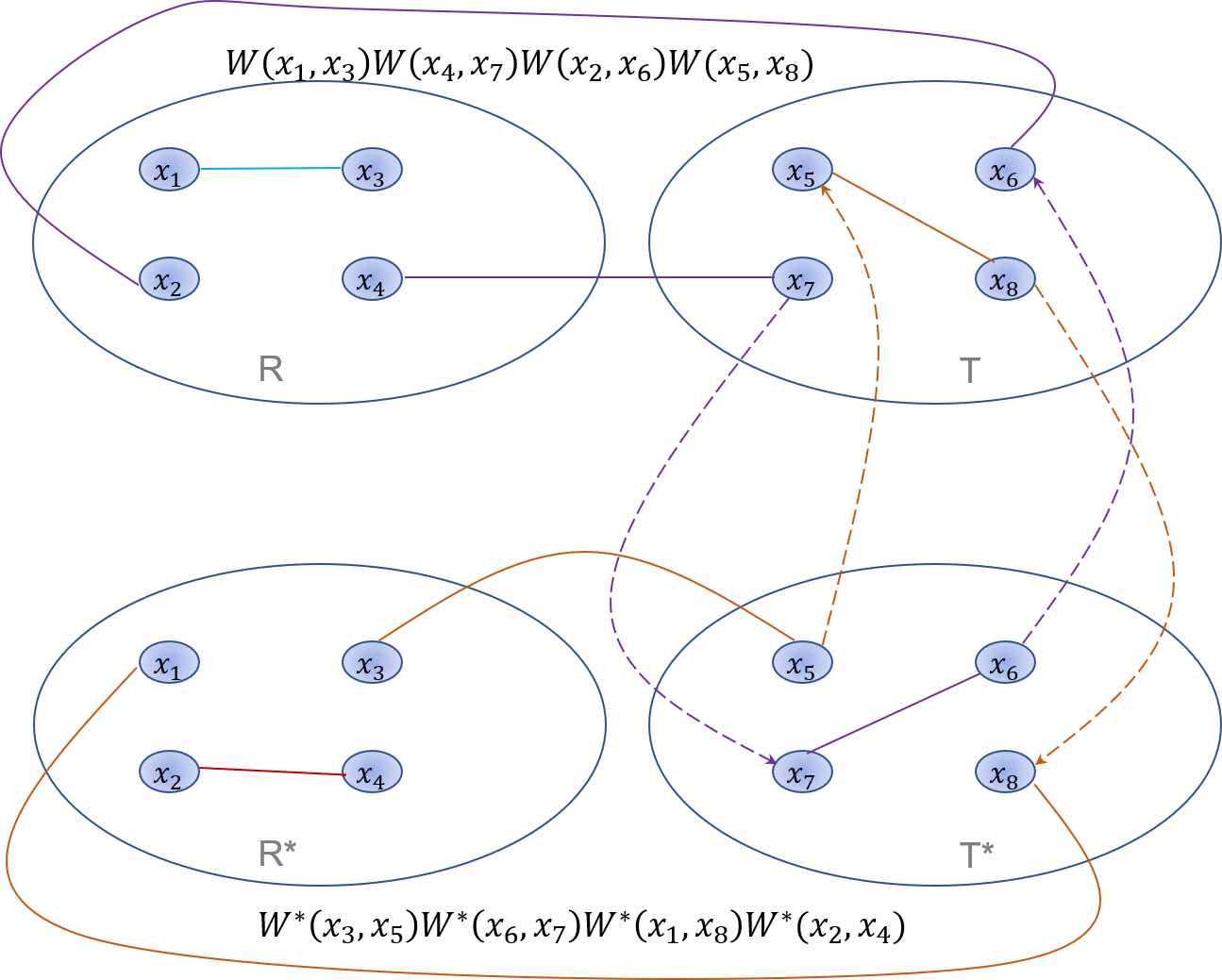}
\caption{One term in the expansion of $q_n(x_1, \cdots, x_k)$.
}
\label{fig:summation_path_example}
\end{figure}

In the expansion of $|\Haf{W_{\x}}|^2$, each term is associated with one perfect matching in $W_{\x}$ and one perfect matching in $W_{\x}^*$. So there are $n$ edges pictorially; see Figure~\ref{fig:summation_path_example} for an example. Applying Rule 3, we can substantially reduce the number of remaining edges, and thus simplify the calculation of the marginal probability. Following a similar calculation to Eq.~\eqref{eq:summation_path_example}, we derive the following lemma.
\begin{lemma}\label{lem:summation_path}
According to the endpoints of the summation path, there are the following four possibilities:
\begin{enumerate}
	\item if both endpoints are in $R$, then applying Rule 3 gives an edge in $R$, which represents $W(x_i, x_j)$ with $i, j \leq k$;
	\item if both endpoints are in $R^*$, then applying Rule 3 gives an edge in $R^*$, which represents $W^*(x_i, x_j)$ with $i, j \leq k$;
	\item if one in $R$ and another in $R^*$, then it gives $\delta(x_i, x_j)$ with $i, j \leq k$;
	\item if no endpoints in $R$ or $R^*$, then it gives a number. 
\end{enumerate}
\end{lemma}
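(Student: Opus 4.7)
The plan is to prove this by induction on the length of the summation path, using repeatedly the identity
\[
\sum_{a} W(i, a)\, W^*(j, a) \;=\; \delta(i, j),
\]
which follows from $WW^\dagger = I$ together with the symmetry $W = W^T$ (both are immediate from $W = UU^T$ with $U$ unitary). First I would pin down the structural features of a summation path: at every interior (summed) vertex exactly one $\sigma$-edge and one $\tau$-edge are incident, so consecutive edges along the path must alternate between a $W$-factor and a $W^*$-factor. This alternation forces the length parity to control the endpoint types: odd length means both endpoints have the same type, so they both lie in $R$ (case 1) or both in $R^*$ (case 2), while even length means the endpoints have opposite types, one in $R$ and one in $R^*$ (case 3); a closed loop with no endpoints is case 4.

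For the inductive step, I would pick any interior vertex $x_a$, whose two incident edges contribute locally (after using $W = W^T$ to align $x_a$ into the same argument slot) the factor $W(u, x_a)\, W^*(v, x_a)$. Summing over $x_a$ yields $\delta(u, v)$; if $v$ is itself a summed index --- i.e., another interior vertex --- the subsequent summation over $v$ identifies it with $u$, so the path effectively shrinks by two edges while preserving both original endpoints and their types. Iterating, a length-$1$ path is already a single edge, giving $W(x_i, x_j)$ or $W^*(x_i, x_j)$; a length-$2$ path collapses in one step to $\delta(x_i, x_j)$; longer paths of the appropriate parity collapse to the same respective forms by induction, yielding cases 1, 2, and 3.

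For the cycle case (no endpoints), the same pairwise edge collapse applies, reducing the cycle length by two at each step until we are left with a scalar trace $\mathrm{Tr}\bigl((WW^*)^p\bigr)$ for a cycle of length $2p$, which (using $WW^* = I$, a consequence of the symmetry and unitarity of $W$) equals $m$, a number independent of all remaining variables. The main obstacle I foresee is the bookkeeping: one must verify that after each reduction the shortened summation path still satisfies the same alternation structure, and that the use of $W = W^T$ to move $x_a$ into the desired argument slot is consistent with the graph conventions of Rules 1--3. Once this is in place, the four cases follow by a short finite induction on the path length.
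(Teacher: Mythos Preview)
Your proposal is correct and follows essentially the same approach as the paper. The paper does not give a detailed proof of this lemma at all: it simply states that the four cases follow ``from a similar calculation to Eq.~\eqref{eq:summation_path_example}'', i.e., by iterating the unitarity identity $\sum_a W(i,a)W^*(j,a)=\delta(i,j)$ along the path. Your induction on path length, the alternation/parity analysis determining which of cases~1--3 arises, and the cycle computation $\mathrm{Tr}((WW^*)^p)=m$ using $WW^*=WW^\dagger=I$ (from $W=W^T$ unitary) are exactly a clean formalization of that one-line sketch. The only cosmetic point to watch is that in the paper's picture each summed variable $x_a$ really lives in both the $W_{\x}$ and $W_{\x}^*$ graphs (joined by a dashed identification), so ``one $\sigma$-edge and one $\tau$-edge at each interior vertex'' should be read as ``the variable $x_a$ occurs once in a $W$-factor and once in a $W^*$-factor''; with that reading your alternation argument goes through verbatim.
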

% We leave the proof in the Appendix~\ref{app:proof_summation_path}. 
In the expansion of the marginal probability, there are many terms like Figure~\ref{fig:summation_path_example}, and there are two layers of summation (Eq.~\eqref{eq:marginal}). One is the summation over variables in $T$ ($T^*$), which induces summation path and leads to some level of simplification. The other is the summation over perfect matchings of $W_{\x}$ and $W_{\x}^*$, which allows us to group the remaining terms after applying Rule 3. This eventually gives us an theorem, stating that the marginal probability can be decomposed into summation over small Hafnians and permanents.

\begin{theorem}
The marginal distribution of the sequence $(x_{1},\cdots,x_{k})$ from the distribution $q_n(\x)$ can be expressed as follows,
% \begin{align}
% q_n(x_1, \cdots, x_k) \sim& \sum_{j_1, j_2} \sum_{\vb{a}, \a'} \Haf(R_{\a}) \Haf(R^*_{\a'}) \sum_{\substack{\mu \in S_{\mu}  }} F(k, \mu, j_1, j_2) \notag \\
% &  \sum_{A, B} \Per(S_{A,B}) \sum_{\e, \e'} \Haf(R_{\e}) \Haf(R^*_{\e'})  \ , \label{eq:marginal_final_expression}
% \end{align}

\begin{align}
q_n(x_1, \cdots, x_k) =
\sum_{\substack{
j_1, j_2\\
\a, \a'
} }
\Haf(R_{\a}) \Haf(R^*_{\a'}) \sum_{\substack{\mu \in S_{\mu}  }} F_{\mathbf{j}}^{(\mu)}
 \sum_{\substack{
 A, B\\
 \e, \e'
 }} \Per(S_{A,B}) \Haf(R_{\e}) \Haf(R^*_{\e'})  \ , \label{eq:marginal_final_expression}
\end{align}
where $F_{\mathbf{j}}^{(\mu)} :=\frac{1}{f_n} F(k, \mu, j_1, j_2) =\frac{(n-k)!}{f_n} \binom{\frac{n-k+\mu+m}{2}-1}{k-j_{1}-j_{2}+\frac{m}{2}-1}$. The ranges of the summation variables are as follows: 1) $j_1$ and $j_2$ are both integers from $\max(0, k - \frac{n}{2})$ to $\frac{k}{2}$; 2) $\vb{a} \in \binom{X_k}{2j_1}$ and $\a' \in \binom{X_k}{2j_2}$; 3) $S_{\mu} \equiv \{\mu \in \mathbb{N}: (3k-2(j_{1}+j_{2})-n) \leq \mu \leq k - 2\max(j_{1},j_{2}), k \equiv \mu \text{ mod 2} \}$ is the range of $\mu$; 4) $A \in \binom{X_k\backslash \a}{\mu}$ and $B \in \binom{X_k\backslash \a}{\mu}$; 5) $\e = X_k\backslash\{ \a\cup A \}$ and $\e' = X_k\backslash\{ \a'\cup B \}$.
$S_{A,B}$ is defined as $S_{A,B}(i, j) \equiv \delta({A_i}, {B_j})$.
\label{thm:marginal}
\end{theorem}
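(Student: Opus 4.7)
The plan is to start from
\ba
q_n(x_1,\ldots,x_k) = \frac{1}{f_n}\sum_{x_{k+1},\ldots,x_n} |\Haf(W_\x)|^2
\ea
and expand the Hafnian squared as a double sum over perfect matchings,
\ba
|\Haf(W_\x)|^2 = \sum_{\sigma,\tau\in\mathcal{M}_n}\prod_{j=1}^{n/2} W(x_{\sigma_{2j-1}}, x_{\sigma_{2j}})\,W^*(x_{\tau_{2j-1}}, x_{\tau_{2j}}).
\ea
For each pair $(\sigma,\tau)$, I would view the union as a 2-colored 2-regular multigraph on $[n]$ whose connected components are cycles alternating between $\sigma$-edges (colored $W$) and $\tau$-edges (colored $W^*$). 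The entire problem is to reorganize the sum by how the matchings interact with the partition $[n]=R\cup T$ and then apply Rule~3 to collapse $T$.

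First I would peel off the $R$-skeleton: let $\a\subseteq X_k$ with $|\a|=2j_1$ be the $R$-vertices whose $\sigma$-partner also lies in $R$, and $\a'$ with $|\a'|=2j_2$ the analogue for $\tau$. Summing over the inner perfect matchings of $\a$ and $\a'$ reproduces $\Haf(R_\a)$ and $\Haf(R^*_{\a'})$ by \eqref{HafMatch}. Next, performing the $T$-summation, Lemma~\ref{lem:summation_path} contracts every summation path into one of four outcomes. I would label the case-3 endpoints by $A\subseteq X_k\setminus\a$ and $B\subseteq X_k\setminus\a'$ with $|A|=|B|=\mu$: summing $\prod_i \delta(A_i, B_{\pi(i)})$ over bijections $\pi$ precisely gives $\Per(S_{A,B})$. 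The remaining $R$-vertices outside $\a\cup A$ form $\e$ and get paired by case-1 paths into $W$-edges, whose sum over matchings is $\Haf(R_\e)$; similarly $\e'=X_k\setminus(\a'\cup B)$ with case-2 paths yields $\Haf(R^*_{\e'})$. The range $S_\mu$ is then forced by requiring $|\e|$ and $|\e'|$ to be even and non-negative, giving exactly the stated bounds on $\mu$.

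The main obstacle is computing the residual factor $F_\mathbf{j}^{(\mu)}$, which must sum, over all ways to distribute the $n-k$ labeled $T$-vertices among the path interiors and case-4 cycles, the scalar $m^{\#\text{case-4 cycles}}$. I would handle this via exponential generating functions. For a case-1 path with fixed endpoints, each ordering of the $2q$ interior vertices yields a distinct $(\sigma,\tau)$-pair (the ordering is uniquely recoverable by walking along the path), giving EGF $P_1(x)=\sum_{q\geq 1}x^{2q}=x^2/(1-x^2)$; case-2 paths give the same, and case-3 paths (with odd-length interior) give $P_3(x)=x/(1-x^2)$. For the $m$-weighted set of $T$-only cycles I would use the identity
\ba
\sum_{(\sigma,\tau)\in\mathcal{M}_v^2}m^{\#\text{cycles}(\sigma\cup\tau)} = v!\binom{(v+m)/2-1}{v/2},
\ea
which follows from $\sum_\x |\Haf(W_\x)|^2=f_v$ (Appendix~\ref{Normalization}) combined with the contraction $\sum_\x \prod W\,W^* = m^{\#\text{cycles}}$; this is equivalent to $\exp(mC(x))=(1-x^2)^{-m/2}$ for the cycle EGF. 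Multiplying all the EGFs gives
\ba
E(x) = \frac{x^{2(n_1+n_2)+\mu}}{(1-x^2)^{n_1+n_2+\mu+m/2}},
\ea
with $n_1+n_2=k-j_1-j_2-\mu$. Extracting $(n-k)!\,[x^{n-k}]E(x)$ and applying binomial symmetry produces exactly $F_\mathbf{j}^{(\mu)}=\frac{(n-k)!}{f_n}\binom{(n-k+\mu+m)/2-1}{k-j_1-j_2+m/2-1}$, completing the identification.
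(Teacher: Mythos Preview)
Your structural decomposition---peeling off the internal $R$-matchings $\a,\a'$, classifying the contracted $T$-segments via Lemma~\ref{lem:summation_path} into the four cases, and recognizing that the case-3 endpoints assemble into $\Per(S_{A,B})$ while the case-1 and case-2 endpoints assemble into $\Haf(R_\e)$ and $\Haf(R^*_{\e'})$---is exactly the route the paper takes in Appendix~\ref{app:proof_theorem_1}. One small point: your derivation of $S_\mu$ from ``$|\e|,|\e'|$ even and non-negative'' only yields the parity condition and the upper bound $\mu\le k-2\max(j_1,j_2)$; the lower bound $\mu\ge 3k-2(j_1+j_2)-n$ comes from the fact that the paths together must consume at most $n-k$ interior $T$-vertices, which in your framework is encoded implicitly as the vanishing of $[x^{n-k}]E(x)$ when the numerator degree exceeds $n-k$.

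Where you genuinely diverge is in the evaluation of the residual factor $F_{\mathbf j}^{(\mu)}$. The paper (Appendix~\ref{ExpressionF}) counts by hand: it introduces variables $c_1,c_2,c_3,d$ for the number of internal $(T,T)$ matching pairs allocated to case-3, case-1, case-2 paths and case-4 cycles respectively, uses stars-and-bars for the distribution among paths, multiplies by falling factorials for the labelled placements, and then collapses the resulting triple sum via the Vandermonde-type identity $\sum_{k\le l}\binom{l-k}{m}\binom{q+k}{n}=\binom{l+q+1}{m+n+1}$. Your exponential-generating-function argument replaces all of this: the path EGFs $x^2/(1-x^2)$ and $x/(1-x^2)$ together with the cycle identity $\exp(mC(x))=(1-x^2)^{-m/2}$ (which you correctly extract from the normalization $f_v$ in Appendix~\ref{Normalization}) multiply to a single rational function whose coefficient reads off the binomial directly. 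This is cleaner and avoids the external combinatorial identity; the paper's approach, on the other hand, makes the individual path-length contributions more visible. Both are correct and yield the same closed form.
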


In this theorem, 
$\binom{S}{i}$ denotes the set of all possible combinations of $i$ elements from the set $S$. As an example, $\binom{\X_3}{2} = \{ (x_1,x_2), (x_1,x_3), (x_2,x_3) \}$.
$\Haf(R_{\a})$ and $\Haf(R_{\a'}^*)$ are formed by those edges not involving in the summation path. $\Haf(R_{\e})$ is from case 1 in Lemma~\ref{lem:summation_path} and $\Haf(R_{\e'}^*)$ is from case 2. $\Per(S_{A,B})$ is from case 3 and $F(k, \mu, j_1, j_2)$ is the summation of all numbers simplified from case 1, 2, 3 and 4. However, the proof is actually rather involved, and we leave it in Appendix~\ref{app:proof_theorem_1}.

If we restrict to the collision-free regime, i.e., the regime that $x_i \neq x_j$, then the expression of $q_n(x_1, \ldots, x_k)$ can be further simplified. In Theorem~\ref{thm:marginal}, by the definition of $S_{A, B}$, if there are one element in $A$ different from all elements in $B$, then one row in $S_{A, B}$ will be all zero, which means $\Per(S_{A, B})$ will also be zero. Now in the collision-free regime, in order for $S_{A, B}$ not to have an all-zero row, it should be that for every element $x_i \in A$, there is also an $x_i \in B$. That is, $\Per(S_{A,B}) \neq 0$ if $A = B$. Furthermore, there is at most one $1$ in each row in this regime, which implies the permanent is either 0 or 1. So we have the following proposition. 
\begin{proposition}
In the collision-free regime, 
\ba
\Per(S_{A, B}) = 
\begin{cases}
1, & \text{if } A = B \\
0, & \text{otherwise}
\end{cases} \ .
\ea
\end{proposition}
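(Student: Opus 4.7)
The plan is to split into two cases according to whether $A = B$ as sets. The collision-free assumption means that the entries $x_1, \ldots, x_n$ are pairwise distinct, and the ordering convention stated in the paper says that any subset of $X_k \setminus \mathbf{a}$ is viewed as a vector whose entries are arranged in increasing index order.

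First, suppose $A \neq B$. Since $|A| = |B| = \mu$ and they differ as sets, there exists some $i$ with $A_i \in A \setminus B$. In the collision-free regime, all the $x_\ell$ appearing as components of $\mathbf{x}$ are distinct, so $A_i \neq B_j$ for every $j$, and consequently $S_{A,B}(i,j) = \delta(A_i, B_j) = 0$ for all $j$. The $i$-th row of $S_{A,B}$ is identically zero, so $\Per(S_{A,B}) = 0$.

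Next, suppose $A = B$. By the ordering convention, both are listed in increasing index order, so $A_i = B_i$ for every $i$. Then $S_{A,B}(i,j) = \delta(A_i, A_j)$, and collision-freeness forces $A_i = A_j \iff i = j$. Hence $S_{A,B}$ is the $\mu \times \mu$ identity matrix, whose permanent is $1$.

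There is no real obstacle here; the argument is essentially an unpacking of the definition $S_{A,B}(i,j) = \delta(A_i, B_j)$ combined with the collision-free hypothesis. The only subtlety worth flagging is the use of the paper's convention that a set is implicitly ordered by increasing index, which ensures that $A = B$ as sets implies $A_i = B_i$ componentwise and not merely equality up to a permutation.
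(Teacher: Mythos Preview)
Your proof is correct and follows essentially the same approach as the paper: split on whether $A = B$, use collision-freeness to see that $A \neq B$ forces an all-zero row, and use it again to see that $A = B$ makes $S_{A,B}$ the identity matrix. Your version is in fact a bit more explicit than the paper's, which only argues that each row has at most one $1$ and hence the permanent is $0$ or $1$; you pin down the value $1$ directly by identifying $S_{A,B}$ with the identity, and you correctly flag the role of the increasing-index ordering convention.
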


\subsection{Numerical results}\label{subsec:numerical}

\begin{figure*}
\center
\includegraphics[trim = 0mm 45mm 0mm 48mm, clip=true,width = 1.0\textwidth]{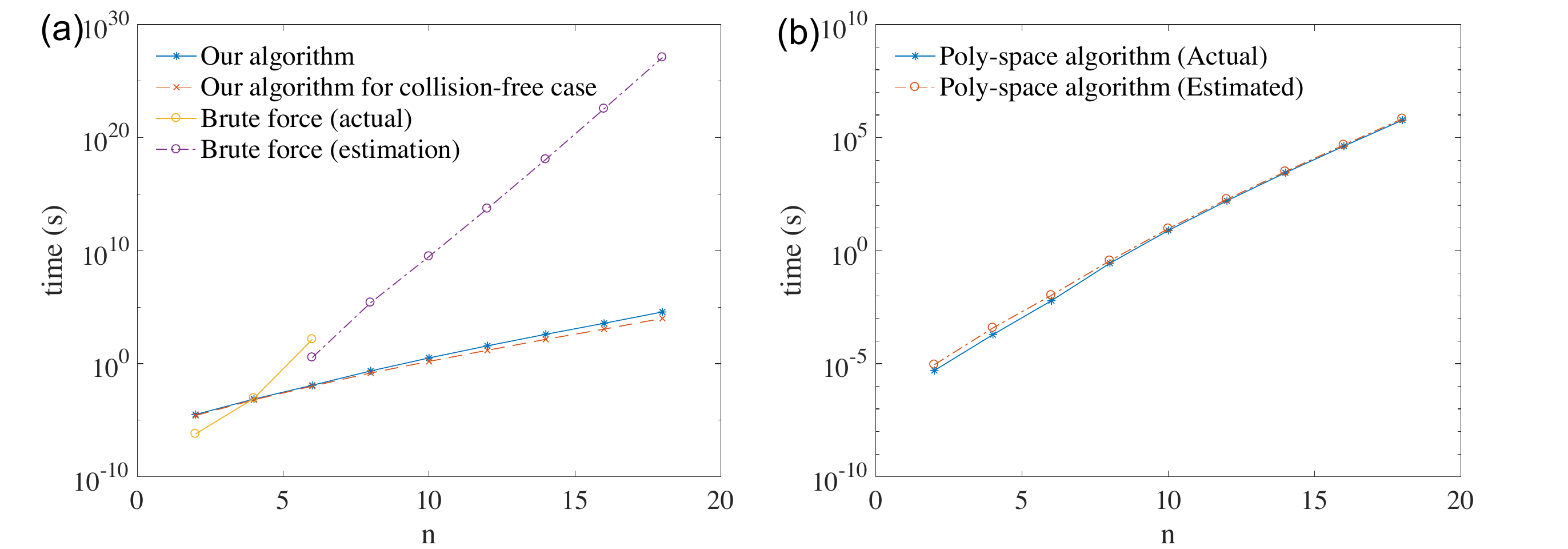}
\caption{(a) Comparison of the running time of brute force sampling and our exponential space algorithm (both full space and collision-free space).
(b) Comparison of the actual running time with the estimated one.
% The actual brute force running time when $n\leq 6$ are the average time of 300 independently sampling, and the estimated brute force time when $6\leq n \leq 18$ are the average time for computing $\frac{1}{10}$ %and
% ${n^{2} + n -1}\choose{n}$
% Hafnians with size $n$.
}
\label{fig:comparison}
\end{figure*}

In the numerical simulation, the number of modes is set to be the square of photon number, that is $m = n^2$. 
Using our algorithm, we can sample 18 photons in about 20 hours on a laptop. As a comparison, we can only sample 6 photons on the same laptop using the brute-force sampling (details about the brute-force sampling can be found in Appendix~\ref{app:brute-force}). Figure~\ref{fig:comparison} (a) shows the comparison of the running time (in log scale) of brute force sampling and of our polynomial-space algorithm. The data points for actual running time of our algorithm for $n \leq 14$ and of brute-force sampling are from 300 repetitions. For $n > 6$, we give an estimation of the running time of the brute-force sampling, which is much more slower than that of our algorithm. 
Figure~\ref{fig:comparison} (b) presents the actual running time of the exponential-space algorithm (blue solid lines), which is faster than the polynomial-space one as expected.
% The actual running time of the exponential-space, polynomial-space, and the estimated time of polynomial-space are shown in Figure~\ref{fig:comparison} (a) (in log scale).
%The estimated time for $2\sim 18$ photons is also shown in Figure~\ref{fig:comparison} (a) (the orange dashed line).

To check that our algorithm output the right results, we compared the distribution generated by our algorithm and the distribution from brute-force sampling for $n = 4$ and $m = 16$, as shown in Figure \ref{Comparision}. To restore the original distribution for our sampling algorithm as much as possible, we sample 400,000 times independently.
The horizontal axis is $-\log{p}$ and the vertical axis is $\Pr(-\log{p})$. From Figure \ref{Comparision} (a), we find the distributions by the two sampling algorithms are both close to theoretical value when the probability $p$ is not too small. On the other hand, when $p$ is small, the distribution value by our our algorithm and by Brute Force sampling are consistent with each other. In Figure \ref{Comparision} (b), we list the comparison of distribution of our algorithm and theoretical value. 

\begin{figure}[t!]
\includegraphics[trim = 0mm 45mm 0mm 48mm, clip=true, width = 1.0\textwidth]{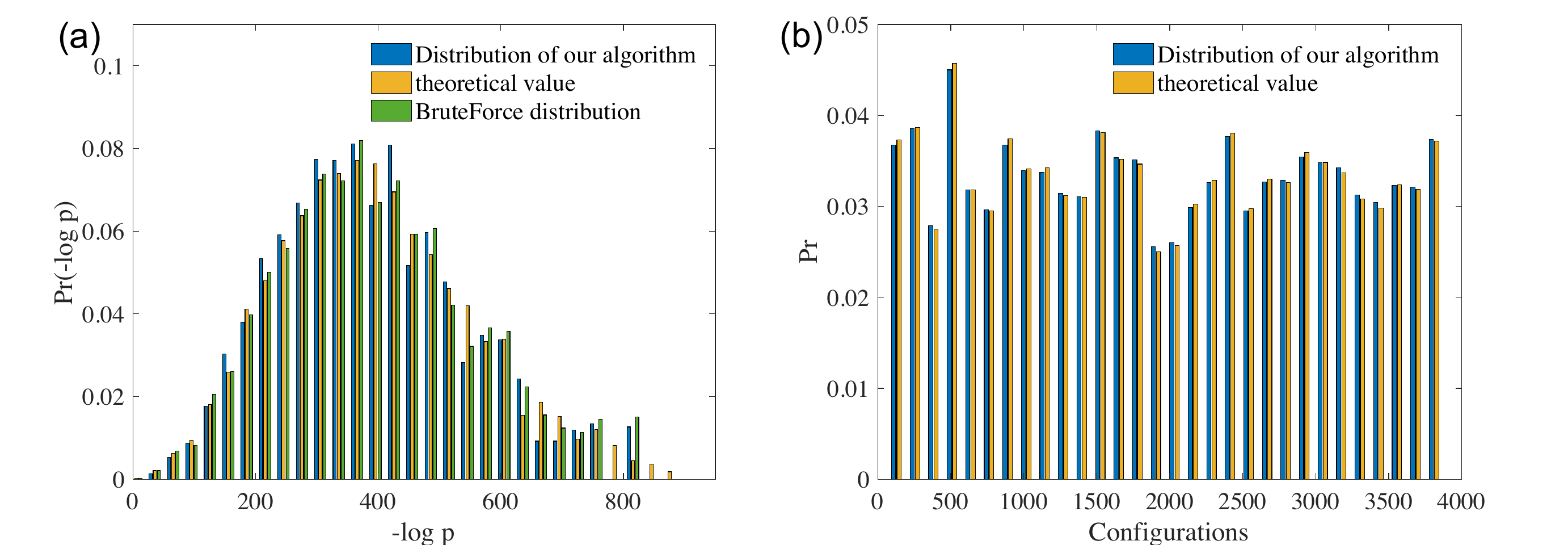}
\caption{(a) Comparision of distribution from our classical simulation algorithm (blue), brute-force sampling (green) and theorical calculation (yellow) when $n = 4$ and $m = 16$, when $p$ is not too small.  (b) Similar comparison when $p$ is small.}
\label{Comparision}
\end{figure}

Together with the first part of our classical sampling process (Sample a photon number $n$, in which we set $N = 50n_\text{most}$.), we give the frequencies of each output mode for photon number $>1$ and $= 1$, in which $m = 36$ in Figure \ref{fig:photonCom}.
From Figure \ref{fig:photonCom} we know when $r$ is small enough, the settings can be restricted to collision-free. 
\begin{figure*}
    \centering
\includegraphics[trim = 0mm 45mm 0mm 48mm, clip=true,width = 1.0\textwidth]{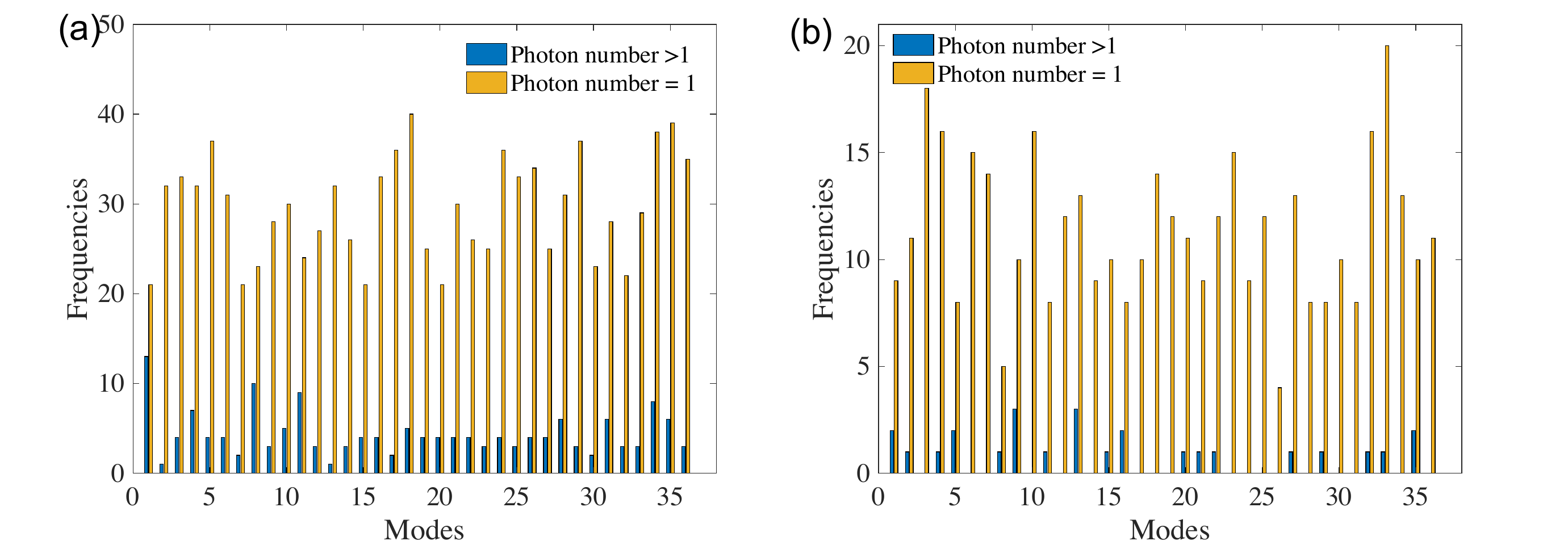}
    \caption{Comparison of frequencies for different photon numbers of each mode after independently sampling 300 times, in which $m = 36$, and (a) $r = 0.3423$, (b) $r = 0.2$.}
    \label{fig:photonCom}
\end{figure*}

To estimate the running time on high-performance computers, we first benchmark the running time of calculating Hafnians and permanents on a laptop, and then estimate the time directly from Theorem \ref{thm:marginal}. The results are shown in Figure~\ref{fig:comparison}~(b) (yellow dashed line).
We can see that the estimated time for 2 $\sim$ 18 photons is consistent with the actual running time on our laptop, justifying our estimation. 
After that, we can transform the estimated time into the required basic operations and results for photon number 20 $\sim$ 30 are presented in Table \ref{EstimateTable}.

% Also, we can transform the time into basic operations, and compare them to our derived upper bound in the time complexity analysis (Appendix~\ref{timeA} and~\ref{timeB}).
% Figure~\ref{fig:comparison} (c) and (d) show the comparison for the polynomial-space and the exponential-space algorithm, respectively. It can seen that our upper bounds are far from tight. Also, we check the correctness of our sampling algorithm in Appendix \ref{sec:more_numerical}.

% shows the running time in log scale of the exponential-space algorithm and the polynomial-space one.

We only consider the polynomial-space algorithm here, since the exponential-space algorithm requires massive inter-process communication, thus increasing time cost significantly. Besides, the polynomial-space algorithm allows parallel implementation.
From this estimation, we predict that Sunway TaihuLight \cite{fu2016sunway}, which can implement $10^{17}$ floating-point operations per second, can sample about 30 photons within one day.

To check the correctness of our analysis for large-scale parallel computation in Sunway TaihuLight, we test our algorithm on the HuaWei KunLun server for sampling a configuration for 20 photons, which use 256 cores and about $4.2\times 10^7$ CPU times. The actual time is about 36 hours, with average parallelism 325. The CPU time on the HuaWei Kunlun server is 4 times longer than our estimation since some extra operations for parallel process are required. However, this should not affect our estimated time for Sunway TaihuLight in the order of magnitude.

\begin{table*}
\centering
\begin{tabular}{c|c|c|c}
\hline\hline
    {Photons}  & 20 & 22 & 24 \\
    \hline
    {Estimated non-parallel time ($s$)}  & $9.37\cdot 10^7$ &$1.25\cdot 10^8$ & $1.66\cdot 10^9 $\\
    \hline
    {Estimated basic operations} &  $4.28\cdot 10^{15} $&$5.74\cdot 10^{16}$ & $7.59\cdot 10^{17}$ \\
    \hline
    Estimated paralleled time on Sunway TaihuLight (s) & 0.0428 & 0.574 & 7.59\\
      \hline\hline
     Photons & 26 & 28 & 30\\
    \hline
  Estimated non-parallel time 
   (s)& $2.18\cdot 10^{10}$ & $2.85 \cdot 10^{11}$ & $3.76\cdot 10^{12}$\\
    \hline
  Estimate of basic operations  & $9.96 \cdot 10^{18}$ & $1.30\cdot 10^{20}$ & $1.72 \cdot 10^{21}$\\
  \hline
 Estimated paralleled time on Sunway TaihuLight (s)& 99.6 &$ 1.30 \cdot 10^{3} $& $1.72 \cdot 10^{4}$ \\
    \hline\hline
\end{tabular}
\caption{Estimation of times and basic operations for even photons ranging from $20$ to $30$ by Alg. \ref{alg:GauSample1} with polynomial space.}
\label{EstimateTable}
\end{table*}

\section{Conclusion\label{conclusion}}

In this paper, we study the Gaussian Boson Sampling problem and give a polynomial-space classical algorithm with time complexity $O(m\sinh^{2} r) + O(\text{poly}(n) 2^{8n/3})$, which is far more efficient than the brute-force sampling method. The time complexity of our algorithm can be improved to $O(m\sinh^{2} r) + O(\text{poly}(n) 2^{5n/2})$ if exponential space is used to store intermediate calculation results. Nevertheless, our numerical results implies the above two bounds are far from tight. Appendix~\ref{sec:more_numerical} shows a comparison of actual executed time with the theoretical bounds

We benchmark our algorithm on a laptop and on Huawei Kunlun server. The former can sample 18 photons in 20 hours while the latter can sample 20 photons in 36 hours. Based on our algorithm, Sunway TaihuLight is estimated to be able to sample about 30 photons. These numbers are smaller than that of standard boson sampling, which suggests that GBS may be more feasible for demonstrating quantum computational supremacy.

Note: Recently, Quesada et al.~\cite{quesada2019classical} independently presented a classical simulation algorithm for general GBS, where they can simulate about 14 photons and 100 modes in about $10^3$ seconds.

% In this paper we give a $O(m\sinh^{2} r) + O(\text{poly}(n) 2^{8n/3})$ time and polynomial space classical simulation algorithm for GBS, which can be generated to $O(m\sinh^{2} r) + O(\text{poly}(n) 2^{5n/2})$ time and $O(m2^n)$ space classical simulation algorithm, and we also give the source code of the sampling algorithm in \cite{Code}, in which the code of comuputing hafnian part are by \cite{hafnian}. Our algorithm can serve as a tool for the implementations of some applications of GBS \cite{bradler2017gaussian,arrazola2018using,arrazola2018quantum}, since the quantum experiments of GBS are still in the beginning. We can simulate the GBS process of about 18 photons with standard laptop.

\bibliographystyle{plain}
\bibliography{ref}

% \end{multicols}

\newpage

\begin{appendix}

\section{\label{Normalization}Proof of the normalization of $q_n(x)$}

%\begin{align}
%\binom{n/2 + m/2 -1}{n/2}\frac{\tanh^{n} r}{\cosh^{m} r} &= \sum_{x} q(x) = \frac{\tanh^{n} r}{\cosh^{m} r n!}\sum_{x}|\text{Haf}(W_{x})|^{2}
%\end{align}
\begin{proof}
By the definition of Hafnian, we can let $\sigma_{1} = n$ in Equation \eqref{HafMatch} and thus
%\begin{align}
%\text{Haf}(V) = V_{1,2}
%\sum_{\mbox{\tiny$\begin{array}{c}
%\sigma\in S_{n}\\
%\sigma_{1}=1,\sigma_{2}=2
%\end{array}$}}
%\prod_{j=2}^{n/2}V_{\sigma_{2j-1},\sigma_{2j}} +
% \sum_{\mbox{\tiny$\begin{array}{c}
%\sigma\in S_{n}\\
%\sigma_{1}=1,\sigma_{3}=2
%\end{array}$}}
%V_{1,\sigma_{2}}V_{2,\sigma_{4}}\prod_{j=3}^{n/2}V_{\sigma_{2j-1},\sigma_{2j}}
%\end{align}
\begin{align*}
\text{Haf}(V) =\sum_{u\in [n-1]} V_{n,u}
\sum_{\mbox{\tiny$\begin{array}{c}
\sigma\in \mathcal{M}_{[n-1]\backslash \{u\}}
\end{array}$}}
\prod_{j=1}^{n/2-1}V_{\sigma_{2j-1},\sigma_{2j}}
\end{align*}
where $\mathcal{M}_{[n-1]\backslash \{u\}}$ is the collection of all of perfect matchings of $[n-1]\backslash \{u\}$. Let $W_{x}^{S}$ denote matrix obtained by selecting rows and columns of set $x_{S}$ of W simultaneously. Let $f_{n} = \sum_{x\in[m]^{n}}|\text{Haf}(W_{\x}^{[n]})|^{2}$, then
\begin{align}
f_{n}=&\sum_{x\in[m]^{n}}|\text{Haf}(W_{\x})|^{2} \\
=&\sum_{x\in[m]^{n}}\sum_{u,v\in [n-1]} W_{x_{n},x_{u}}W_{x_{n},x_{v}}^{*}
\sum_{\mbox{\tiny$\begin{array}{c}
\sigma\in M_{[n-1]\backslash \{u\}}\\
\tau\in M_{[n-1]\backslash \{v\}}
\end{array}$}}
\prod_{j=1}^{n/2-1}W_{x_{\sigma_{2j-1}},x_{\sigma_{2j}}} W_{x_{\tau_{2j-1}},x_{\tau_{2j}}}^{*}\\
&=\sum_{\mbox{\tiny$\begin{array}{c}
x\in[m]^n, u = v\\
\sigma\in M_{[n-1]\backslash \{u\}}\\
\tau\in M_{[n-1]\backslash \{u\}}
\end{array}$}}\prod_{j = 1}^{n/2 - 1}W_{x_{\sigma_{2j-1}},x_{\sigma_{2j}}}W_{x_{\tau_{2j-1}},x_{\tau_{2j}}}^{*} + \sum_{\mbox{\tiny$\begin{array}{c}
u\ne v\in[n-1]\\
\sigma\in M_{[n-1]\backslash \{u\}}\\
\tau\in M_{[n-1]\backslash \{u\}}
\end{array}$}}\prod_{j = 1}^{n/2 - 1}W_{x_{\sigma_{2j-1}},x_{\sigma_{2j}}}W_{x_{\tau_{2j-1}},x_{\tau_{2j}}}^{*}\\
&=(n-1)mf_{n-2}+(n-1)(n-2)f_{n-2} \label{Eqf}\\
&=(n-1)(m+n-2)f_{n-2}\\
&=\binom{\frac{m+n}{2}-1}{\frac{n}{2}}n!
\end{align}
Equation \eqref{Eqf} holds by the fact that $W$ is unitary and symmetric, \emph{i.e.},
\begin{align*}
\sum_{x_{n}\in[m]} &W_{x_{n},x_{u}}W_{x_{n},x_{v}}^{*} = [x_{u}=x_{v}]=[u = v] + [u\ne v,x_{u} = x_{v}].
\end{align*}
\end{proof}

\section{Equivalence of $W_{\s}$ and $W_{\x}$}
\label{App:equivalence}

Plug $W_{\x}(i, j) = W(x_i, x_j)$ into the expression of $\Haf(W_{\x})$, which gives
\ba
\Haf(W_{\x}) &=& \sum_{\sigma \in \cal{M}_n} W_{\x}(\sigma_1, \sigma_2) W_{\x}(\sigma_3, \sigma_4) \ldots W_{\x}(\sigma_{n-1}, \sigma_n) \\
&=& \sum_{\sigma \in \cal{M}_n} W(x_{\sigma_1}, x_{\sigma_2}) W(x_{\sigma_3}, x_{\sigma_4}) \ldots W(x_{\sigma_{n-1}}, x_{\sigma_n}) \ .
\ea
If $\x'$ is obtained from $\x$ through a permutation $\tau$, then $\x' = (x_{\tau_1}, x_{\tau_2}, \ldots, x_{\tau_n})$. So for a perfect matching $\sigma$, $x'_{\sigma_i} = x_{\sigma'_i}$, where $\sigma'$ is another perfect matching. Then
\ba
\Haf(W_{\x'}) &=& \sum_{\sigma \in \cal{M}_n} W(x'_{\sigma_1}, x'_{\sigma_2}) W(x'_{\sigma_3}, x'_{\sigma_4}) \ldots W(x'_{\sigma_{n-1}}, x'_{\sigma_n}) \\
&=& \sum_{\sigma' \in \cal{M}_n} W(x_{\sigma'_1}, x_{\sigma'_2}) W(x_{\sigma'_3}, x_{\sigma'_4}) \ldots W(x_{\sigma'_{n-1}}, x_{\sigma'_n}) \ ,
\ea
Thus $\Haf(W_{\x'}) = \Haf(W_{\x})$ if $\x'$ is obtained from $\x$ through permutation.

Now, given an $\x$, we can read an $\s$. We will show that $\Haf(W_{\x}) = \Haf(W_{\s})$. First, we sort $\x$ with an increase order to get $\z$. For example, if $\x = (4, 2, 4, 1)$, then after an increase sort, we will get $\z = (1, 2, 4, 4)$. Increase sort is actually a permutation, so we have $\Haf(W_{\x}) = \Haf(W_{\z})$. Moreover, $\Haf(W_{\z}) = \Haf(W_{\s})$, which can be verified from the definition of $W_{\s}$. Thus, $\Haf(W_{\x}) = \Haf(W_{\s})$.

\section{\label{timeA}Time complexity when using polynomial space.}
By \cite{bjorklund2018faster}, the time complexity of Hafnian of a $n\times n$ matrix is $O(n^32^{n/2})$, and the time complexity of $\Per(S_{A,B})$ is $O(\mu)$.

The time complexity of Alg. \ref{alg:GauSample1} is bounded by
%%\begin{tiny}
%\begin{align*}
%&\sum_{k = 0}^{n}\sum_{ja = \max( 0, 2k - n)}^{k}\sum_{jb = \max( 0, 2k - n)}^{ja} \binom{k}{ja}\binom{k}{jb}2^{(ja+jb)/2}\sum_{ ta = \max(0,3k-ja-jb-n)}^{k - ja} \\
%&\binom{k-ja}{ta}\binom{k-jb}{ta}2^{(k-ja-ta+k-jb-ta)/2}2^{ta}\\
%&\leq \sum_{k = 0}^{n/2} \sum_{ja = 0}^{k}\sum_{jb = 0}^{k} \binom{k}{ja}\binom{k}{jb}2^{k}\sum_{0\leq ta \leq k - ja}\binom{k-ja}{ta}\binom{k-jb}{ta}\\
%& + \sum_{k = n/2}^{n} \sum_{ja = 2k - n}^{k}\sum_{jb = 2k -n}^{k} \binom{k}{ja}\binom{k}{jb}2^{k}\sum_{0\leq ta \leq k - ja}\binom{k-ja}{ta}\binom{k-jb}{ta}\\
%&\leq  \sum_{k = 0}^{n/2} 2^{k}\sum_{ja = 0}^{k}\sum_{jb = 0}^{k} \binom{k}{ja}\binom{k}{jb}\binom{2k-ja-jb}{k-jb}+ \sum_{k = 0}^{n/2}2^{n-k}\sum_{ja=0}^{k}\sum_{jb = 0}^{k}\binom{n-k}{ja}\binom{n-k}{jb}\binom{ja+jb}{jb}\\
%&\leq \sum_{k = 0}^{n/2} 2^{k}\sum_{ja = 0}^{k}\binom{k}{ja}\sum_{jb = 0}^{k}\binom{k}{jb}\binom{2k-ja}{k-jb} + \sum_{k = 0}^{n/2}2^{n-k}\sum_{ja = 0}^{k}\binom{n-k}{ja}\sum_{jb = 0}^{k}\binom{n-k}{jb}\binom{ja + k}{jb}\\
%&\leq \sum_{k = 0}^{n/2} 2^{k}\sum_{ja = 0}^{k}\binom{k}{ja}\binom{3k-ja}{k} + \sum_{k = 0}^{n/2}2^{n-k}\sum_{ja = 0}^{k}\binom{n-k}{ja}\binom{n + ja}{n - k}\\
%&\leq \sum_{k = 0}^{n/2} 2^{k} \binom{4k}{2k} + \sum_{k = 0}^{n/2}2^{n-k}\binom{2n}{n}\\
%&\leq 2^{5/2n} + 
%\end{align*}
%%\end{tiny}
\begin{align}
&cm\sum_{k = 0}^{n}\sum_{ja = \max( 0, 2k - n)}^{k}\sum_{jb = \max( 0, 2k - n)}^{ja} \binom{k}{ja}\binom{k}{jb}ja^3 jb^3 2^{(ja+jb)/2}\sum_{ \mu = \max(0,3k-ja-jb-n)}^{k - ja}   \\ 
&\binom{k-ja}{\mu}\binom{k-jb}{\mu}
(k-ja -\mu)^3 (k - jb - \mu)^3
2^{(k-ja-\mu+k-jb-\mu)/2} \mu \\ 
&\leq  \text{poly}(n)\sum_{k = 0}^{n/2} \sum_{ja = 0}^{k}\sum_{jb = 0}^{k} \binom{k}{ja}\binom{k}{jb}2^{k}\sum_{0\leq \mu \leq k - ja}\binom{k-ja}{\mu}\binom{k-jb}{\mu}    \\
& + \text{poly}(n)\sum_{k = n/2}^{n} \sum_{ja = 2k - n}^{k}\sum_{jb = 2k -n}^{k} \binom{k}{ja}\binom{k}{jb}2^{k}\sum_{0\leq \mu \leq k - ja}\binom{k-ja}{\mu}\binom{k-jb}{\mu}    \\
&\leq \text{poly}(n) \sum_{k = 0}^{n/2} 2^{k}\sum_{\mbox{\tiny$\begin{array}{c}
0\leq ja \leq k\\
0\leq jb \leq k
\end{array}$}} \binom{k}{ja}\binom{k}{jb}\binom{2k-ja-jb}{k-jb}+ \text{poly}(n)\sum_{k = 0}^{n/2}2^{n-k}\sum_{\mbox{\tiny$\begin{array}{c}
0\leq ja \leq k\\
0\leq jb \leq k
\end{array}$}}\binom{n-k}{ja}\binom{n-k}{jb}\binom{ja+jb}{jb}\\
&\leq \text{poly}(n) \sum_{k = 0}^{n/2} 2^{k}
\sum_{\mbox{\tiny$\begin{array}{c}
0\leq ja \leq k\\
0\leq jb \leq k
\end{array}$}}\binom{k}{ja}\binom{k}{jb}\binom{2k-ja}{k-jb} + \text{poly}(n)\sum_{k = 0}^{n/2}2^{n-k}\sum_{\mbox{\tiny$\begin{array}{c}
0\leq ja \leq k\\
0\leq jb \leq k
\end{array}$}}\binom{n-k}{ja}\binom{n-k}{jb}\binom{ja+jb}{jb}   \\
&\leq \text{poly}(n)\sum_{k = 0}^{n/2} 2^{k}\sum_{ja = 0}^{k}\binom{k}{ja}\binom{3k-ja}{k} + \text{poly}(n)\sum_{k = 0}^{n/2}k^{2}2^{8n/3}\label{EqNotEq}\\
&\leq \text{poly}(n)\sum_{k = 0}^{n/2} 2^{k} \binom{4k}{2k} + \text{poly}(n) n^{3} 2^{8n/3}   \\
&\leq \text{poly}(n) 2^{8/3n}  
\end{align}
for some constant $c$. Eq. \eqref{EqNotEq} holds by the result of Lemma \ref{LemEq}.
\begin{lemma}
\begin{equation}
2^{n-k}\binom{n-k}{x}\binom{n-k}{y}\binom{x+y}{x} \leq 2^{8n/3}
\label{EqExpNotEq}
\end{equation}
\label{LemEq}
\end{lemma}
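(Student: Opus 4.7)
The plan is to prove the bound by case analysis on $k$, splitting at the threshold $k = n/3$. The two regimes call for quite different slackness in the binomial estimates: when $k$ is large the crude bound $\binom{M}{j}\le 2^M$ is already sharp enough, but when $k$ is small we must exploit the constraints $x,y\le k$ to keep the binomials $\binom{n-k}{\cdot}$ away from their central values.

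\textbf{Case 1 ($k \ge n/3$).} Here I would bound each factor by its maximum: $\binom{n-k}{x}\le 2^{n-k}$, $\binom{n-k}{y}\le 2^{n-k}$, and, since $x+y\le 2k$, $\binom{x+y}{x}\le 2^{x+y}\le 2^{2k}$. Multiplying with the prefactor $2^{n-k}$ gives
\[
2^{n-k}\binom{n-k}{x}\binom{n-k}{y}\binom{x+y}{x}\ \le\ 2^{3(n-k)+2k}\ =\ 2^{3n-k}\ \le\ 2^{3n-n/3}\ =\ 2^{8n/3},
\]
using the hypothesis $k\ge n/3$ in the last step. This case is essentially mechanical.

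\textbf{Case 2 ($k < n/3$).} In this regime $k\le (n-k)/2$, so the binomial $\binom{n-k}{x}$ is monotonically non-decreasing in $x$ throughout $[0,k]$. Hence $\binom{n-k}{x}\le \binom{n-k}{k}$, and likewise $\binom{n-k}{y}\le \binom{n-k}{k}$; the third factor is bounded by its middle value $\binom{x+y}{x}\le \binom{2k}{k}$. This yields the $x,y$-free estimate
\[
2^{n-k}\binom{n-k}{x}\binom{n-k}{y}\binom{x+y}{x}\ \le\ 2^{n-k}\binom{n-k}{k}^{2}\binom{2k}{k}.
\]
Substituting the entropy bound $\binom{M}{j}\le 2^{MH(j/M)}$ and $\binom{2k}{k}\le 2^{2k}$ and writing $\alpha=k/n$, the target inequality reduces to the one-variable statement
\[
1+\alpha+2(1-\alpha)\,H\!\left(\tfrac{\alpha}{1-\alpha}\right)\ \le\ \tfrac{8}{3}\qquad\text{for }\alpha\in[0,1/3].
\]
The plan is to verify this by elementary calculus: differentiate the left-hand side (which produces the tidy expression $1+\log_2\frac{(1-2\alpha)^4}{\alpha^2(1-\alpha)^2}$ up to absorbed $1/\ln 2$ constants) and show that the supremum on $[0,1/3]$ is attained at the endpoint $\alpha=1/3$, where equality $g(1/3)=5/3$ gives exactly $8/3$ on the original scale. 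Geometrically this is the expected answer, since the extremal configuration $k=x=y=n/3$ gives $\binom{2n/3}{n/3}^{3}2^{2n/3}\sim 2^{8n/3}/\mathrm{poly}(n)$, matching the stated bound.

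\textbf{Main obstacle.} The principal technical nuisance is the calculus step in Case~2: verifying monotonicity of $g(\alpha)=\alpha+2(1-\alpha)H(\alpha/(1-\alpha))$ on $[0,1/3]$ rigorously. The derivative changes sign near the endpoint, so one must either control $g$ directly on the interval or absorb subpolynomial corrections from sharper Stirling estimates on the three binomials into the implicit polynomial factor that appears when the lemma is combined with the outer sums in Appendix~\ref{timeA}. Once this one-variable inequality is settled, Cases~1 and~2 immediately assemble into the full statement.
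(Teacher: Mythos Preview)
Your Case~1 and the opening of Case~2 match the paper: both use monotonicity in $x,y$ (valid since $k\le(n-k)/2$ when $k<n/3$) together with $\binom{2k}{k}\le 2^{2k}$ to reduce to bounding $f(k):=2^{n+k}\binom{n-k}{k}^{2}$ for $k<n/3$. The gap is in your final step. Substituting the entropy bound $\binom{n-k}{k}\le 2^{(n-k)H(k/(n-k))}$ and writing $\alpha=k/n$, your target becomes
\[
1+\alpha+2(1-\alpha)\,H\!\left(\tfrac{\alpha}{1-\alpha}\right)\ \le\ \tfrac{8}{3}\qquad(\alpha\in[0,\tfrac13]),
\]
which is \emph{false}: at $\alpha=0.3$ the left side is $\approx 2.679>8/3\approx 2.667$. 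You already spotted the warning sign---your own derivative gives $g'(1/3)=1+\log_2\tfrac14=-1<0$---so the supremum is strictly interior, not at the endpoint. The entropy bound discards precisely the $\Theta(1/\sqrt{n})$ Stirling prefactors that make the lemma hold near $k\approx n/3$; absorbing them into the outer $\mathrm{poly}(n)$ would rescue the time-complexity claim in Appendix~\ref{timeA} but not the lemma as stated.

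The paper diverges at exactly this point by never passing to a continuous surrogate: it keeps the exact $f(k)$ and argues via the ratio $f(k)/f(k-1)$ that $f$ is increasing for integer $k\le n/3$, whence $f(k)\le f(n/3)\le 2^{8n/3}$. (For the record, the paper's displayed ratio omits the square coming from $\binom{n-k}{k}^{2}$; restoring it gives $2\bigl((n-2k+1)(n-2k+2)/(k(n-k+1))\bigr)^{2}$, which in fact dips below~$1$ just before $k=n/3$, so the paper's Case~2 is itself not airtight. The instinct to work with the exact binomials rather than the entropy upper bound is nonetheless the right one.)
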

in which $0\leq k \leq n/2$ and $0\leq x, y \leq k$.

\begin{proof}
If $k\geq n/3$, then $\frac{n-k}{2}\leq k$. Thus
\begin{align*}
2^{n-k}\binom{n-k}{x}\binom{n-k}{y}\binom{x+y}{x}&\leq 2^{n-k}\binom{n-k}{\frac{n-k}{2}}^{2}\binom{2k}{k}\leq 2^{n-k}2^{2(n-k)}2^{2k}\leq 2^{8/3n}
\end{align*}
If $k< n/3$, then
\begin{align*}
2^{n-k}\binom{n-k}{x}\binom{n-k}{y} \binom{x+y}{y} \leq 2^{n-k}\binom{n-k}{k}^{2}\binom{2k}{k}\leq 2^{n+k}\binom{n-k}{k}^{2}
\end{align*}
Let $f(k) = 2^{n+k}\binom{n-k}{k}^{2}$, it is easy to find
$$\frac{f(k)}{f(k-1)} = \frac{2(n-2k+1)(n-2k+2)}{k(n-k+1)}> 1$$
when $k\leq n/3$, thus $f(k) < f(n/3)$ when $k < n/3$.
\end{proof}

 \section{\label{timeB}Time complexity when using $O(m2^{n})$ space.}
 Observe that if we conserve Haf$(W_{\mathbf{x}}^{s})$ for all of $s$, we will avoid a lot of repeated computation, and use additional $O(2^{n})$ space.
 The space complexity is,
\begin{align*}
\sum_{k = 0}^{n}\sum_{j=0}^{\frac{k}{2}}\binom{k}{2j}=O(m2^{n})
\end{align*}
There need another $ m\sum_{k=1}^{n} \sum_{j=1}^{\lfloor k/2\rfloor} \binom{k}{2j}T_{2j}=O(mn^{3}(1+\sqrt{2})^{n})$ 
time to compute all of Hafnians of the sub-matrix, where $T_{2j} = O(j^3 2^j)$.
Similarly with the above analysis, we find that when save all of the Hafnians, and the time complexity for sampling process is
%the time complexity would be reduced to $O^{*}(2^{5/2n})$. 
\begin{align*}
&\leq cm\sum_{k = 0}^{n}\sum_{ja = \max( 0, 2k - n)}^{k}\sum_{jb = \max( 0, 2k - n)}^{ja} \binom{k}{ja}\binom{k}{jb}\sum_{ \mu = \max(0,3k-ja-jb-n)}^{k - ja} \binom{k-ja}{\mu}\binom{k-jb}{\mu}2^{k-\mu - \frac{ja + jb}{2}}\\
&\leq cm\sum_{k = 0}^{n/2} 2^k\sum_{ja = 0}^{k}\sum_{jb = 0}^{k} \binom{k}{ja}\binom{k}{jb}\sum_{0\leq \mu \leq k - ja}\binom{k-ja}{\mu}\binom{k-jb}{\mu}\\
& +cm \sum_{k = n/2}^{n} 2^{n-k}\sum_{ja = 2k - n}^{k}\sum_{jb = 2k -n}^{k} \binom{k}{ja}\binom{k}{jb}\sum_{0\leq \mu \leq k - ja}\binom{k-ja}{\mu}\binom{k-jb}{\mu}\\
&\leq  cm\sum_{k = 0}^{n/2} 2^{k}\sum_{ja = 0}^{k}\sum_{jb = 0}^{k} \binom{k}{ja}\binom{k}{jb}\binom{2k-ja-jb}{k-jb}+ cm\sum_{k = 0}^{n/2}2^{k}\sum_{ja=0}^{k}\sum_{jb = 0}^{k}\binom{n-k}{ja}\binom{n-k}{jb}\binom{ja+jb}{jb}\\
&\leq cm\sum_{k = 0}^{n/2}2^{k} \binom{4k}{2k} + cm\sum_{k = 0}^{n/2}k^2\cdot 2^k \binom{2n}{n}\\
&\leq cmn^3 2^{5n/2}
\end{align*}

\section{\label{AlgLemma}Pseudocode code to compute weight of $l$ for $x_k$, where $1\leq l\leq k$. }
Alg. \ref{GauSample2} is the pseudocode code to compute the value proportional to probability $q(x_1,\cdots, x_k)$, where $x_1,\cdots, x_{k-1}$ is already been sampled, and $1 \leq x_k  \leq l$.
\begin{center}
\begin{minipage}{1.0\linewidth}
\IncMargin{0.1cm}
\begin{algorithm}[H]
\SetKwInOut{Input}{input}
\SetKwInOut{Output}{output}
\Input{$m,n$ and $m\times m$ matrix $U$ and $x_{1},\cdots, x_{k}$.}
\Output{The value proportional to probability $q(x_{1},\cdots, x_{k})$.}
\emph{$sum\leftarrow 0$}\;
\emph{Compute and save the values of $\binom{\frac{n-k + \mu + m}{2} - 1}{k + \frac{m - ja - jb}{2} - 1}$ for all of $ja, jb, \mu$ in polynomial time and space}\;
\For{$ja \leftarrow \max(0,2k - n)$ to $k$ step by 2}
{
    \For{$jb \leftarrow \max(0, 2k - n)$ to $ja$ step by 2}
    {
        $a\leftarrow \max(0,3k - ja - jb - n), b \leftarrow k - \max(ja, jb)$\;
        \For{$s\in\binom{[k]}{ja}, s' \in\binom{[k]}{jb}$}
        {
            sum$_{1} \leftarrow 0$\;
            \For{$\mu \leftarrow a$ to $b$ step by 2}
            {
                 {temp $\leftarrow 0$}\;
                 \For{$A\in \binom{[k]\backslash s}{\mu}, B\in \binom{[k]\backslash s'}{\mu}$}
                 {
                     $e \leftarrow [k]\backslash \{s \cup A\}, e' \leftarrow [k]\backslash \{s \cup B\}$\;
                     \emph{Construct the 0-1 matrix $S_{A,B}$ with $S_{A,B}(i,j)$ equals 1 if and only if $x_{A_{i}} = x_{B_{j}}$}\;
                     temp$\leftarrow$ temp$ + $Haf$(W_{x}^{e})$ Haf$(W_{x}^{*e'})$Per$(S_{A,B})$\;
                 }
                 sum$_{1} \leftarrow $sum$_{1} + \binom{\frac{n-k + \mu + m}{2} - 1}{k + \frac{m - ja - jb}{2} - 1} \cdot $temp\;
            }
            sum$ \leftarrow $sum$ + $ Haf$(W_{x}^{s})$ Haf$(W_{x}^{*s'}) \cdot \text{sum}_{1}$\;
        } 
        \If{$jb < ja$}{
            sum $\leftarrow $ 2 * real(sum)\tcp*{real($z$) return the real part of complex $z$}
        }
   }
}
\Return{sum}
\caption{Pseudocode code for computing $q(x_{1},\cdots, x_{k})$.}
\label{GauSample2}
\end{algorithm}
\DecMargin{0.1cm}
\end{minipage}
\end{center}

\section{Compute Hafnain by splitting technique.\label{app:HafSplit}}

The following lemma gives some intuitive explanation for Theorem \ref{thm:marginal}.
\begin{lemma}
\begin{align}
    \Haf(W_{\x}) = \sum_{ j }  \sum_{\mathbf{a}, \a', \e, \e' } \Haf(R_{\mathbf{a}}) \Haf(T_{\mathbf{a}'}) \Per(G_{\mathbf{e}, \mathbf{e'}}) \ ,
    \label{eq:splitUp}
\end{align}
where the ranges of the summation variables are as follows: 1) $j$ is an integer from $\max(0, k- \frac{n}{2})$ to $\frac{k}{2}$; 2) $\mathbf{a} \in\binom{\X_k}{2j}$; 3) $\mathbf{a}' \in \binom{\X_n\backslash \X_k}{n-2k-2j}$; 4) $\mathbf{e} = \X_k\backslash \mathbf{a}$; 5) $\mathbf{e}'=\X_n\backslash ( \X_k\cup \mathbf{a}')$. 
\label{lem:ThmSplitHaf}
\end{lemma}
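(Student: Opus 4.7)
The plan is to prove the decomposition by classifying every perfect matching $\sigma$ of $[n]$ according to how it interacts with the partition $\X_n = R \sqcup T$ with $R = \X_k$ and $T = \X_n \setminus \X_k$. Starting from the matching definition
\[
\Haf(W_{\x}) = \sum_{\sigma \in \cal{M}_n} \prod_{i=1}^{n/2} W(x_{\sigma_{2i-1}}, x_{\sigma_{2i}}),
\]
I would observe that every perfect matching uniquely determines a subset $\a \subseteq \X_k$ of $R$-vertices that are paired \emph{inside} $R$ (with $|\a| = 2j$ necessarily even). The remaining $k - 2j$ vertices $\e := \X_k \setminus \a$ must be paired with vertices in $T$, so we similarly pick $\a' \subseteq T$ to be the vertices of $T$ that are matched internally; then $\e' := T \setminus \a'$ is exactly the set of $T$-vertices that cross to $R$, and the bipartite piece is a genuine matching only if $|\e| = |\e'|$, i.e. $|\a'| = n - 2k + 2j$. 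The constraint $|\a'| \geq 0$ forces $j \geq k - n/2$, which explains the lower bound on $j$, and $|\a| \leq k$ gives the upper bound $j \leq k/2$.

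Given $j, \a, \a'$, the summation factorizes into three independent sub-sums: the perfect matchings of $\a$ inside $R$, the perfect matchings of $\a'$ inside $T$, and the bijections between $\e$ and $\e'$. By Rule 1/2 (edge weights) together with the definitions of $R_{\a}$, $T_{\a'}$, and $G_{\e, \e'}$, the first two sub-sums are exactly $\Haf(R_{\a})$ and $\Haf(T_{\a'})$. The bipartite piece, being a sum over bijections $\pi : \e \to \e'$ of $\prod_i W(x_{\e_i}, x_{\e'_{\pi(i)}})$, is by definition the permanent $\Per(G_{\e, \e'})$; this is where a permanent rather than a Hafnian enters, since the two endpoints of every crossing edge are distinguishable (one in $R$, one in $T$). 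Summing these three factors and then over all admissible $\a, \a', j$ produces precisely the right-hand side of \eqref{eq:splitUp}.

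The main obstacle is bookkeeping rather than conceptual: I must verify that the map
\[
\sigma \longmapsto (j, \a, \a', \text{internal matching of } \a, \text{internal matching of } \a', \text{bijection } \e \to \e')
\]
is a bijection between $\cal{M}_n$ and the disjoint union of the tuple-data, so that no matching is counted twice and none is missed. This requires checking that specifying $\a$ and $\a'$ (together with the three sub-combinatorial objects) recovers $\sigma$ uniquely, and that the parity/size constraints $|\a|=2j$, $|\a'| = n-2k+2j$, $|\e|=|\e'|=k-2j$ match the admissible ranges in the lemma. Once the bijection is established, the factorization of weights is immediate from Rule 2, and summing yields the claimed identity.
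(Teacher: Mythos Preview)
Your proposal is correct and follows essentially the same approach as the paper: both arguments establish a bijection between perfect matchings of $[n]$ and triples consisting of an internal matching on $\a\subseteq R$, an internal matching on $\a'\subseteq T$, and a bijection $\e\to\e'$ on the crossing vertices, then read off the three factors $\Haf(R_{\a})$, $\Haf(T_{\a'})$, $\Per(G_{\e,\e'})$. The paper phrases the bijection as two separate checks (distinct quadruples yield distinct matchings; every matching arises from some quadruple), whereas you build the forward map $\sigma\mapsto(j,\a,\a',\ldots)$ explicitly and argue invertibility---but the content is identical. Incidentally, your count $|\a'|=n-2k+2j$ is the correct one and agrees with Table~\ref{tab:MatchThm}; the exponent $n-2k-2j$ appearing in the lemma statement is a sign typo.
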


 %According to Rule~1, each $W(x_i, x_j)$ corresponds to an edge, and the edges can be categorized into three types as we mentioned before. In Theorem~\ref{ThmSplitHaf}, $\Haf(R_{\a})$ is formed by grouping edges in $R$, $\Haf(T_{\a'})$ edges in $T$, and $\Per(G_{\e, \e'})$ the interconnected edges across $R$ and $T$. The detailed proof is delayed in Appendix \ref{HafSplit}. Note that this theorem is just a regrouping of terms in $\Haf(W_{\x})$, and does not achieve any computational speedup.

\begin{proof}
In the following, we use quadruple $(\mathbf{a}, \mathbf{a'},\mathbf{e}, \mathbf{e'})$ denote four vertex sets of $G$, where  $\mathbf{a} \in \binom{X_k}{2j}, \mathbf{e} = X_k \backslash \mathbf{a}, \mathbf{a}' \in \binom{X_n\backslash X_k}{n-2k-2j}, \mathbf{e}' = X_n\backslash (X_k \cup \mathbf{a}')$, which construct subgraphs triple $(R_{\mathbf{a}}, T_{\mathbf{a}'}, G_{\mathbf{e}, \mathbf{e'}})$ of $G$, where $R_{\mathbf{a}}, T_{\mathbf{a}'}$ are two complete graphs, and $G_{\mathbf{e}, \mathbf{e'}}$ is a bipartite graph, edges only exists between vertex sets $\mathbf{e}$ and $\mathbf{e}'$. 
We need to prove that any two perfect matchings of two distinct $(R_{\mathbf{a}}, T_{\mathbf{a}'}, G_{\mathbf{e}, \mathbf{e'}})$ triples represent different perfect matching in $G$, and any perfect matching of $G$, which is also a perfect matching of triple $(R_{\mathbf{a}}, T_{\mathbf{a}'}, G_{\mathbf{e}, \mathbf{e'}})$.

\begin{itemize}
    \item [1)]For any two different vertex sets $(\mathbf{a}, \mathbf{a'},\mathbf{e}, \mathbf{e'})$ and $(\theta, \theta',\mathbf{b}, \mathbf{b'})$, suppose $\mathbf{a}\ne \theta$, then there exists a vertex $u$ in $\mathbf{a}$ and $\mathbf{b}$ simultaneously (Since $(\theta,\mathbf{b})$ is a partition of $[k]$). Suppose $M_1,M_2$ are two perfect matchings of $(R_{\mathbf{a}}, T_{\mathbf{a}'}, G_{\mathbf{e}, \mathbf{e'}})$ and $(R_{\theta}, T_{\theta'}, G_{\mathbf{b}, \mathbf{b'}})$ repectively. Suppose matching pair $(u, v)$ is in subgraph $R_{\mathbf{a}}$, and matching pair $(u,v')$ is in subgraph $G_{\mathbf{b}, \mathbf{b'}}$, then $v \in \mathbf{a}$ are different from $v' \in \mathbf{b}'$. Thus $M_1$ and $M_2$ are two different subgraphs in $G$.
    \item [2)] Suppose $M$ is a perfect matching in $G$. In the following we prove that there exists a quadruple $(\mathbf{a}, \mathbf{a'},\mathbf{e}, \mathbf{e'})$ and a perfect matching triple $(M_1, M_2, M_3)$ of their representing triple $(R_{\mathbf{a}}, T_{\mathbf{a}'}, G_{\mathbf{e}, \mathbf{e'}})$, such that $(M_1, M_2, M_3)$ construct $M$.
    As in figure \ref{fig:matchings}, we partition the vertex in $G$ into two sets $R$ and $T$. For a matching pair $(u,v)$, we construct quadruple $(\mathbf{a}, \mathbf{a'},\mathbf{e}, \mathbf{e'})$ and $(M_1, M_2, M_3)$ as follows.
    \begin{itemize}
        \item If $u,v\in R$, then push vertex $u,v$ into $\mathbf{a}$, and let $(u, v)$ be a matching pair in $M_1$.
        \item If $u\in R$ and $v \in T$, then push vertex $u$ into $\mathbf{a}$ and $v$ into $\mathbf{a}'$, and let $(u,v)$ be a matching pair in $M_2$.
        \item If $u,v\in T$, then push $u,v$ into $\mathbf{e}'$, and let $(u,v)$ be a matching pair in $M_3$.
    \end{itemize}
    Clearly triple $(M_1,M_2,M_3)$ is a perfect matching of triple $(R_{\mathbf{a}}, T_{\mathbf{a}'}, G_{\mathbf{e}, \mathbf{e'}})$.
\end{itemize}
Thus we are done!
\end{proof}

\section{Proof of Theorem~\ref{thm:marginal}} \label{app:proof_theorem_1}

\begin{proof}[Proof of Theorem~\ref{thm:marginal}]

% First, $W^*$ also can be viewed as an adjacency matrix of a graph $G^*$, whose vertices can be categorized into two subgraph $R^*$ and $T^*$, similar to the case of $W$.
% Then, in the summation of Eq.~\eqref{eq:marginal}, we can classify matching pairs $(x_{\sigma_i}, x_{\sigma_{}i+1})$ from $W$ and $W^*$ into three cases, as listed in Table~\ref{tab:MatchThm}. 
For our purpose, we use $\{ R_1, \ldots, R_k \}$ to relabel those variables in $R$, and $\{ T_1, \ldots, T_{n-k} \}$ to relabel those in $T$. Similar notation applies to variables in $R^*$ and $T^*$. In one term of the original expansion of $q_n(x_1, \ldots, x_k)$ (Eq.~\eqref{eq:marginal}), suppose there are $j_1$ edges of the form $(R_i, R_j)$ and $j_2$ edges of the form $(R^*_i, R^*_j)$. Recall that Rule ~1 tells us an edge is equivalent to a matrix element. The values of $j_1$ and $j_2$ may vary in different terms. 
Then the number of other kinds of matching pairs are as shown in Table~\ref{tab:MatchThm}.
\begin{table}[h]
\centering
\begin{tabular}{c|c|c|c}
\hline\hline
 & case 1 & case 2 & case 3 \\ \hline
$W$   & \begin{tabular}[c]{@{}c@{}}$(R_i, R_l)$\\ $j_1$ pairs\end{tabular} & \begin{tabular}[c]{@{}c@{}}$(R_i, T_l)$\\ $k - 2j_1$ pairs\end{tabular} & \begin{tabular}[c]{@{}c@{}}$(T_i, T_l)$\\ $\frac{n}{2} - k + j_1$ pairs\end{tabular} \\ \hline
$W^*$ & \begin{tabular}[c]{@{}c@{}}$(R^*_i, R^*_l)$\\ $j_2$ pairs\end{tabular}     & \begin{tabular}[c]{@{}c@{}}$(R^*_i, T^*_l)$\\ $k - 2j_2$ pairs\end{tabular}   & \begin{tabular}[c]{@{}c@{}}$(T^*_i, T^*_l)$\\ $\frac{n}{2} - k + j_2$ pairs\end{tabular} \\ \hline\hline
\end{tabular}
\caption{Number of pairs of different cases in one term of Eq.~\eqref{eq:marginal}.}
\label{tab:MatchThm}
\end{table}

Note that $T_i$ and $T^*_i$ are from the set $X_n\backslash X_k$, which will be summed over, and $R_i$ and $R^*_i$ are from the set $X_k$. From Lemma~\ref{lem:ThmSplitHaf}, one can see that $\Haf(W_{\x})$ will give $\Haf(R_{\a})$ and $\Haf(W^*_{\x})$ will give $\Haf(R^*_{\a'})$. Both terms contain no vertices from $T$ and $T^*$, so they are retained in the final expression of $q_n(x_1, \cdots, x_k)$.

\begin{figure*}
    \centering
  \begin{tikzpicture}[shorten >=1pt, auto, node distance=3cm, ultra thick,scale=0.8]
    \tikzstyle{node_style} = [circle,draw=gray,shade,minimum size=2pt,%fill=antiquebrass,
    font=\sffamily\bfseries]
    \tikzstyle{node_style2} = [%fill=blue!20!,
  text=gray, font=\sffamily\bfseries]
    \tikzstyle{node_style3} = [circle]
    \tikzstyle{edge_style} = [draw=black, line width=2, ultra thick] 

\draw (0,0) ellipse (3cm and 1.5cm);
\draw (-1.2,-0.5) ellipse(1cm and 0.5cm);
%\draw (1,-0.6) ellipse(1cm and 0.5cm);
\draw (1,0.5) ellipse(1cm and 0.5cm);

\node[node_style2]  at (-1.8,-0.5) {$A$};
\node[node_style2]  at (-1.6,0.8) {$\textbf{a}$};
\node[node_style2]  at (0.5,0.5) {$\textbf{e}$};
\node[node_style2]  at (0,-2) {$R$};

%\vertex (a) at (1.5,0.6) {};
\node (a) at (1.5,0.6) {};
\node (a2) at (1.1, 0.6) {};
%\filldraw(1.5,0.6) circle[radius=2pt];
\node (b) at (-1.4,-0.4) {};

\fill (a) circle (3pt);
\fill (a2) circle (3pt);
\fill (b) circle (3pt);

\draw (9,0) ellipse (3cm and 1.5cm);
\draw (10.5,0) ellipse(1cm and 0.5cm);
\draw (8,-0.6) ellipse(1cm and 0.5cm);
\draw (8, 0.6) ellipse(1cm and 0.5cm);

\node (c) at (8,0.6) {};
\node (c2) at (8.5, 0.6) {};
\node (d) at (8.5,-0.4) {};
\node (d2) at (8.5, -0.8){};
\node (e) at (11,0) {};

\fill (c) circle (3pt);
\fill (c2) circle (3pt);
\fill (d) circle (3pt);
\fill (d2) circle (3pt);
\fill (e) circle (3pt);

\node[node_style2]  at (10,0) {$T_a$};
\node[node_style2] at (7.5,0.6) {$T_b$};
\node[node_style2] at (7.7,-0.7) {$T_c$};
\node[node_style2] at (9,-2) {$T$};
\node[node_style2]  at (9.5,-1) {$T_d$};

%% R^* T^*
\draw (0,-6) ellipse (3cm and 1.5cm);
\draw (-1.5,-6.5) ellipse(1cm and 0.5cm);
%\draw (1,-6.7) ellipse(1cm and 0.5cm);
\draw (1,-5.6) ellipse(1cm and 0.5cm);

\node[node_style2]  at (-1,-5) {$\textbf{a}'$};
\node[node_style2]  at (-2,-6.4) {$B$};
\node[node_style2]  at (0.5,-5.5) {$\textbf{e}'$};
\node[node_style2]  at (0,-8) {$R^*$};
\node(a1) at (1.5,-5.6) {};
\node(a3) at (1,-5.6) {};
\node (b1) at (-1.5,-6.6) {};

\fill (a1) circle (3pt);
\fill (a3) circle (3pt);
\fill (b1) circle (3pt);

\draw (9,-6) ellipse (3cm and 1.5cm);
\draw (10.5,-6) ellipse(1cm and 0.5cm);
\draw (8,-6.6) ellipse(1cm and 0.5cm);
\draw (8, -5.5) ellipse(1cm and 0.5cm);

\node(c1) at (8,-5.6) {};
\node(c3) at (8.5,-5.6) {};
\node (d1) at (8.5,-6.4) {};
\node (d3) at (8.5, -6.8) {};
\node (e1) at (11,-6) {};

\fill (c1) circle (3pt);
\fill (d1) circle (3pt);
\fill (e1) circle (3pt);
\fill (d3) circle (3pt);
\fill (c3) circle (3pt);

\node[node_style2]  at (10,-6) {$T_a^*$};
\node[node_style2] at (7.5,-6.7) {$T_c^*$};
\node[node_style2] at (7.5,-5.5) {$T_b^*$};
\node[node_style2] at (9,-8) {$T^*$};
\node[node_style2]  at (9.5,-7) {$T_d^*$};

\begin{scope}[
              every node/.style={fill=white,circle},
              every edge/.style={draw=airforceblue,very thick}]
\path [-] (a) [bend left=0] edge node[right] {(b)} (c);
%\path [-] (b) edge node[right]  {match} (d);
\path [-] (a2) [bend left=10] edge node[right]  {} (c2);
\path [-] (b) [bend left=30]edge node[right]  {(a)} (e);
\path [-] (a1) [bend left=0] edge node[right] {} (d1);
\path [-] (a3) [bend left = 0]edge node[right]  {(c)} (d3);
\path [-] (b1) [bend left=-25] edge node[right]  {(a)} (e1);
\end{scope}
\draw [airforceblue] (8,-5.6) --(8.5,-5.6);
\draw [airforceblue] (8.5, -0.4) -- (8.5, -0.8);
\path [<->,dashed,gray] (c) [bend left=-30] edge node[above]{}  (c1);
\path [<->,dashed,gray] (c2) [bend left=-20] edge node[above]{}  (c3);
\path [<->,dashed,gray] (d2) [bend left=30] edge node[above]{}  (d1);
\path [<->,dashed,gray] (d) [bend left=40] edge node[above]{}  (d3);

\path [<->,dashed,gray] (e) [bend left=0] edge node[above]{}  (e1);
\end{tikzpicture}
    \caption{Visualization of Summation on $T$ and $T^*$.}
    \label{fig:VisualThm}
\end{figure*}
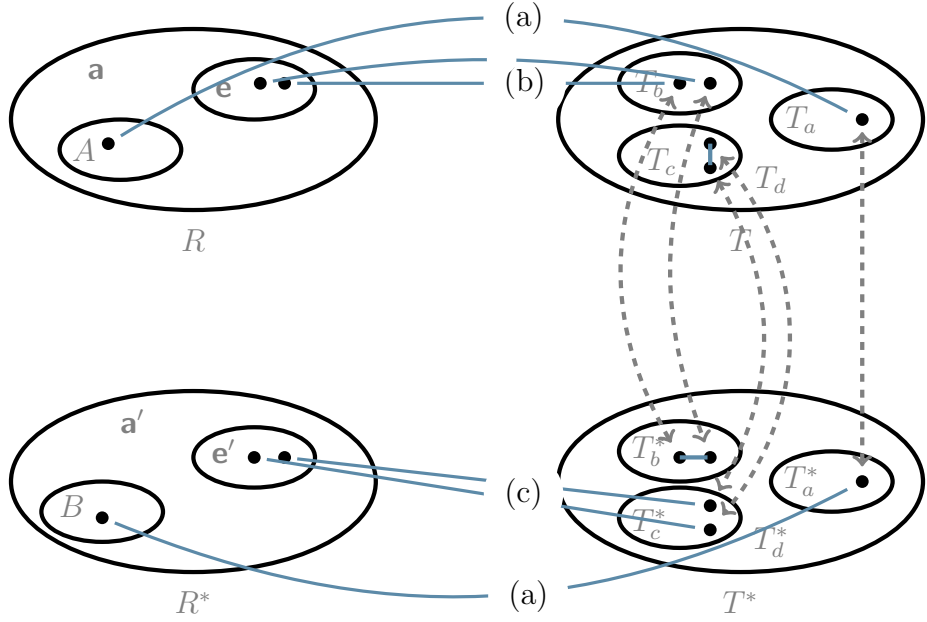

Now we want to analyze what the four cases in Lemma~\ref{lem:summation_path} result in. Suppose in one term, there are $\mu$ summation paths of case~3 (again $\mu$ may vary in different summands). Then $\mu \leq \min(k - 2j_1, k - 2j_2)$ since there is at least one edge $(R_i, T_l)$ and $(R^*_i, T_l)$ in the expression. 
By counting the number of pair $(R_i, T_l)$ (or $(R^*_i, T_l)$), one obtains that the number of paths of case~1 and case~2 is $(k - 2j_1 - \mu)/2$ and $(k - 2j_2 - \mu)/2$, respectively. Since $(k - 2j_1 - \mu)/2$ is an integer, $\mu$ must be of the same parity as $k$, that is $\mu \equiv k \text{ mod 2}$.
Furthermore, there is at least one pair of $(T_i, T_l)$ in case~1 and~2, so
\ba
\frac{k - 2j_2 - \mu}{2} \leq \frac{n}{2} - k + j_2 \Longrightarrow \mu \geq 3k - n - 2(j_1 + j_2) \ .
\ea
So the range of $\mu$ is given by,
\ba
S_{\mu} \equiv \{\mu \in \mathbb{N}: (3k-2(j_{1}+j_{2})-n) \leq \mu \leq k - 2\max(j_{1},j_{2}), k \equiv \mu \text{ mod 2} \} \ .
\ea

For case~3, when $\sum_{\sigma, \tau \in \cal{M}_n}$ is taken into account, what we are actually doing is first multiply some $\delta(*, *)$ together and then take the summation over something. Compactly, if we let $A \in \binom{X_k\backslash \a}{\mu}$ and $B \in \binom{X_k\backslash \a'}{\mu}$, and define $S_{A,B}$ as $S_{A,B}(i, j) \equiv \delta({A_i}, {B_j})$, then case~3 gives
\ba
\sum_{\mu \in S_{\mu}} \sum_{\substack{ A \in \binom{X_k\backslash \a}{\mu} \\ B \in \binom{X_k\backslash \a'}{\mu} }} \Per(S_{A,B}) \ .
\ea
multiplied by a parameter which is related to the selected matchings of $T_i$ in case (a).
% where $a=\max(0,(3k-2(j_{1}+j_{2})-n))$, and $b = k - 2\max(j_{1},j_{2})$.
Case~1-3 together give
\ba
\sum_{\mu \in S_{\mu}} \sum_{\substack{ A \in \binom{X_k\backslash \a}{\mu} \\ B \in \binom{X_k\backslash \a'}{\mu} }} \Per(S_{A,B}) \sum_{\substack{\e = X_k\backslash\{ \a\cup A \} \\ \e' = X_k\backslash\{ \a'\cup B \} }} \Haf(R_{\e}) \Haf(R^*_{\e'})  \ .
\ea
Putting all things together, the final expression of $q(x_1, \cdots, x_k)$ is 
\begin{align*}
q_n(x_1, \cdots, x_k) =&\frac{1}{f_n}|\Haf{(W_\mathbf{x})}|^2\\
=&\frac{(n-k)!}{f_n}\sum_{j_1, j_2 = \max(0, k - \frac{n}{2})}^{\floor{\frac{k}{2}}} \sum_{\substack{\a \in \binom{X_k}{2j_1} \\ \a' \in \binom{X_k}{2j_2}}} \Haf(R_{\a}) \Haf(R^*_{\a'}) \sum_{\substack{\mu \in S_{\mu}  }} F(k, \mu, j_1, j_2) \notag \\
&  \sum_{\substack{ A \in \binom{X_k\backslash \a}{\mu} \\ B \in \binom{X_k\backslash \a'}{\mu} }} \Per(S_{A,B}) \sum_{\substack{ \e = X_k\backslash\{ \a\cup A \} \\ \e' = X_k\backslash\{ \a'\cup B \} }} \Haf(R_{\e}) \Haf(R^*_{\e'})  \ ,
\end{align*}
where $F(k, \mu, j_1, j_2)$ is related to the matching patterns of $T_i$, which is given in Appendix \ref{ExpressionF}.
\end{proof}

\section{\label{ExpressionF}Expression of the factor $F(k, \mu, j_1, j_2)$}

This section explains why $F(k, \mu, j_1, j_2)$ equals $(n-k)!\binom{\frac{n-k+\mu+m}{2}-1}{k-j_{1}-j_{2}+\frac{m}{2}-1}$.
$F(k, \mu, j_1, j_2)$ is 
%matchings\footnote{Matching forms for element relevant to $x_{i}$,}
the number of all of matchings corresponding to $x_i$ 
in which $i>k$, in summation paths of case 1-3 and with form
\begin{align}
     \sum_{x_{i_{1}}\cdots x_{i_{s}}\in [m]} W(x_{i_1}, x_{i_{2}}) W^*(x_{i_2}, x_{i_3})\cdots W^*(x_{i_s}, x_{i_1})
    \label{eq:type_d}
\end{align}
When we consider all of internal perfect matchings, which gives 
\begin{equation}
    \sum_{x_{i_{1}}\cdots x_{i_{s}}\in [m]}|\text{Haf}(W_{x_{i_{1}},\cdots,x_{i_{s}}})|^{2}=\binom{\frac{m+s}{2}-1}{\frac{s}{2}} s! = f_{s}.
    \label{eq:type_dmatch}
\end{equation}
We label $W(x_{i_{a}}, x_{i_{b}})$ as $({i_{a}}, {i_{b}})$, and label $W^{*}(x_{i_{a}}, x_{i_{b}})$ as $({i_{a}}, {i_{b}})^{*}$ for convenience. Observe that there are equal amount of $({i_{a}},{i_{b}})$ and $({i_{a'}},{i_{b'}})^{*}$ in a summation path of case~3, and there are one more $({i_{a}}, {i_{b}})^{*}$ in case~1, one more $({i_{a}}, {i_{b}})$ in case~2. Suppose there are $c_{1}$ matching pairs $({i_{a}},{i_{b}})$ for $\mu$ summation paths of case~3, $c_{2}$ matching pairs $({i_{a}},{i_{b}})$ for $(k-2j_{1}-\mu)/2$ summation paths of case~1, and $c_{3}$ matching pairs $({i_{a}},{i_{b}})^{*}$ for $(k-2j_{2}-\mu)/2$ summation paths of case~2, and the remaining $d$ matching pairs $({i_{a}},{i_{b}})$ which generate case~4. Let $C_{max}:= (c_{1} + c_{2} + c_{3} + d)/2$, then $C_{max} = (n-3k+2j_{1}+2j_{2}+\mu)$ since the summation on last column of table \ref{tab:sequence} equals to $n - k$.

\begin{table}[h]
    \centering
    \begin{tabular}{c|c|c |c}
    \hline \hline
    Summation path     &  $\#$ chains & $(i_a, i_b)$ pairs & $|\{i_a|1\leq a \leq n - k\}|$ \\
    \hline
    case~3 & $\mu$ & $c_1$ & $2c_1 + \mu$\\
\hline
case~1 & $\frac{k - 2j_1 - \mu}{2}$ & $c_2$ & $2c_2 + k - 2j_1 - \mu$\\
  \hline
  case~2 & $\frac{k - 2j_2 - \mu}{2}$ & $c_3 - 1$ &  $2c_3 + k - 2j_2 - \mu$\\
\hline
  case~4  &  &  $d$ & $2d$\\
    \hline\hline
    \end{tabular}
    \caption{Relationship between $i_a$ and the number of pairs and chains for each sequence type.}
    \label{tab:sequence}
\end{table}

Suppose there are $u_{1},\cdots,u_{\mu}$ matching pairs for the $\mu$ summation paths of case~3 respectively. Thus $u_{1}+\cdots + u_{\mu} = c_{1}$, and the number of the matchings equals to\footnote{$n^{\underline{k}} = \binom{n}{k}k!$.}
\begin{align}
    \binom{c_{1}+\mu-1}{c_{1}}(n-k)^{\underline{2c_{1}+\mu}}.
    \label{eq:matchingA}
\end{align}
The first item $\binom{c_1 + \mu - 1}{c_1}$ of Eq. \eqref{eq:matchingA} gives all of species of $u_1, \cdots, u_{\mu}$. In other hand, for case~3 which has $l$ $(i_a, i_b)$ pairs, the overall matchings equals to $(l + 1)!$, thus for all of $2c_1$ pairs, $\mu$ chains and fixed $u_1, \cdots, u_{\mu}$, there will be $(n-k)^{\underline{2c_{1}+\mu}}$ matchings.

In the same way, all of the matchings for $k - 2j_{1}-\mu$ summation paths of case~1 equals to
$$\binom{c_{2} + \frac{k - 2j_{1} - \mu}{2}-1}{c_{2}}(n - k - 2c_{1}-\mu)^{\underline{2c_{2}+k-2j_{1}-\mu}},$$
all of the matchings for $k - 2j_{2} - \mu$ summation paths of case~2 equals to
$$\binom{c_{3} + \frac{k - 2j_{2} - \mu}{2}-1}{c_{3}}(n-2k-2c_{1} - 2c_{2} +2j_{1})^{\underline{2c_{3}+k-2j_{2}-\mu}},$$
all of the matchings for the remaining $d$ matching pairs of case~4 equals to
$f_{2d}$. Thus all of the matchings for $x_{i}$, where $x_{i}>k$, equals to
\begin{align*}
&(n-k)!\sum_{c_{1}=0}^{\frac{C_{max}}{2}}\sum_{c_{2}=0}^{\frac{C_{max}}{2}-c_{1}}\sum_{c_{3}=0}^{\frac{C_{max}}{2}-c_{1}-c_{2}}\binom{c_{1}+\mu-1}{\mu-1}\binom{c_{2} + \frac{k - 2j_{1} - \mu}{2} - 1}{c_{2}}\binom{c_{3} + \frac{k - 2j_{2} - \mu}{2} - 1}{c_{3}}\binom{\frac{m}{2} + d - 1}{d}\\
&=(n-k)!\binom{\frac{n-k+\mu+m}{2}-1}{k-j_{1}-j_{2}+\frac{m}{2}-1}.
\end{align*}
The equation holds by Eq. \eqref{eq:Concrete} \cite{graham1989concrete}
\begin{equation}
    \sum_{0\leq k \leq l} 
    \binom{l - k}{m} \binom{q + k}{n} = \binom{l + q + 1}{ m + n + 1} \text{, where } l,m\geq 0, n\geq q \geq 0.
    \label{eq:Concrete}
\end{equation}

% \section{\label{EquationSym}Proof of Eq. \eqref{eq:type_a}}

% \begin{proof}
% Since $W$ is unitary and symmetric, $\sum_{x}W_{y,x}W_{x,z}^{*} =[y = z]$, and 
% \begin{align*}
% &\sum_{x_{i_{1}}\cdots x_{i_{s_{1}}}\in [m]}W_{x_{\sigma_{1}},x_{i_{1}}}W_{x_{i_{1}},x_{i_{2}}}^{*}W_{x_{i_{2}},x_{i_{3}}}\cdots W_{x_{i_{s_{1}}},x_{\tau_{1}}}^{*}\\
% &=\sum_{x_{i_{2}}}[x_{\sigma_{1}} = x_{i_{2}}]\sum_{x_{i_{3}},\cdots, x_{i_{s_{1}}}\in [m]}W_{x_{i_{2}},x_{i_{3}}}\cdots W_{x_{i_{s_{1}}},x_{\tau_{1}}}^{*}\\
% & = \sum_{x_{i_{2}},x_{i_{4}},\cdots, x_{i_{s_{1}}-1}}[x_{\sigma_{1}} = x_{i_{2}}][x_{i_{2}} = x_{i_{4}}]\cdots [x_{i_{s_{1}-1}} = x_{\tau_{1}}]\\
% & = \sum_{x_{i_{4}},\cdots, x_{i_{s_{1}}-1}}[x_{\sigma_{1}} = x_{i_{4}}][x_{i_{4}} = x_{i_{6}}]\cdots [x_{i_{s_{1}-1}} = x_{\tau_{1}}]\\
% &=[x_{\sigma_{1}} = x_{\tau_{1}}]
% \end{align*}
% \end{proof}

\section{Brute-force sampling.\label{app:brute-force}}

In the brute-force sampling algorithm, we work on $p_n(\s)$ instead of $q_n(\x)$. The algorithm works as follows. First, we divide the interval $[0, 1]$ into $\binom{m+n-1}{n}$ number of intervals (which is the number of $\s$), where the length of each interval is given by $p_n(\s)$. So each interval is related to an $\s$. Then we sample a value $w$ randomly from $[0, 1]$, and the algorithm outputs the corresponding $\s$ of the interval that $w$ lies in. The probability that the algorithm outputs $\s$ is the length of the interval, that is $p_n(\s)$. 
% The brute force sampling algorithm is to select a value $w$ in interval $[0,1]$ randomly, and place all of bases successively in $[0,1]$ with length the same as its probability in GBS. Each sampling time we just need to compute a number of bases from the first basis to current basis until its summation larger than $w$. Then output the current basis. 
% It is obvious that the probability of outputting the current basis equals to its corresponding probability in GBS. 
%Since the probability $\Pr(x\in [\frac{1}{n}, 1]) =1 - \frac{1}{n}$, thus this brute force algorithm need polynomial space, and with probability $1-\frac{1}{n}$, the time complexity is $\geq \frac{1}{n}\binom{m + n - 1}{n} T(\text{Haf}^n)$, where $T(\text{Haf}^n)$ is the time complexity of computing Haf problems with input size $n$, and 
The worst time complexity is $\binom{m + n - 1}{n} T(\text{Haf}^n)$, and in most case we need to compute $c\binom{m + n - 1}{n}$ Hafnians with input size $n$, where $c$ is constant in $(0,1]$. In Figure~\ref{fig:comparison} (b) we set $c = 0.1$ for estimating the time of brute force sampling. For example, when there are $n = 8$ photons and $m = n^2$ modes, we need to compute at least $\frac{1}{10}\binom{71}{8}\approx 1.06\times 10^9$ Hafnians whose size is $8$ with high probability. The time of the algorithm is blowing up for a little input photons $n$ by the number of Hafnians to be computed, so we need to search for a new algorithm.

\section{More numerical results}
\label{sec:more_numerical}

Figure~\ref{fig:comparison_cd} shows a comparison of actual implementation with the theoretical bounds in Appendix~\ref{timeA} and~\ref{timeB}.

\begin{figure*}
\center
\includegraphics[trim = 0mm 40mm 0mm 50mm, clip=true,width = 1.0\textwidth]{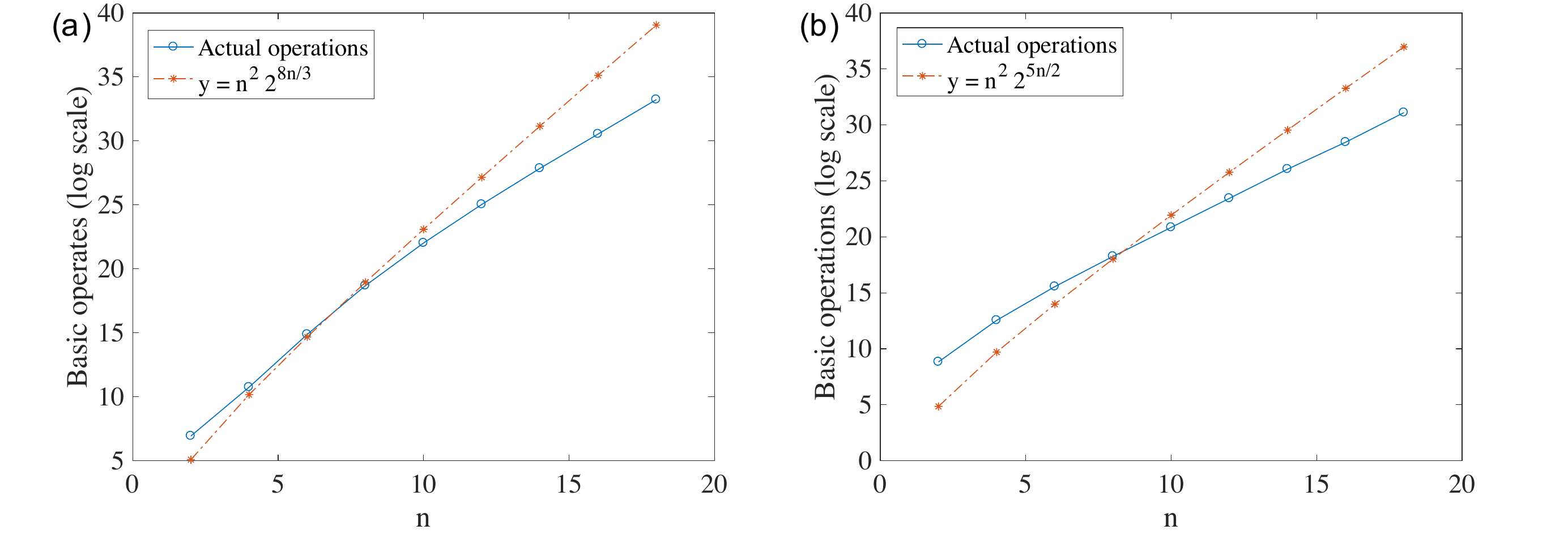}
\caption{(a) Comparison of basic operations counting from actual implementation for the polynomial-space algorithm with the upper bound in time complexity analysis. (b) Comparison of basic operations counting from actual implementation for the exponential-space algorithm with the upper bound in time complexity analysis.
}
\label{fig:comparison_cd}
\end{figure*}

% We list distribution of our algorithm, theoretical value and Brute Force sampling distribution in (a), the horizontal axis denote negative logarithmic of probability of all basis $-\log p$, and vertical axis denote the probability of variable $-\log p$ by truncation.  

\end{appendix}

\end{document}